\documentclass[lettersize,journal]{IEEEtran}
\IEEEoverridecommandlockouts

\usepackage[utf8]{inputenc} 
\usepackage{graphicx}
\usepackage[T1]{fontenc} 
\usepackage{lmodern}
\usepackage{tabu,enumerate}
\usepackage{array,multirow,makecell}
\usepackage{xcolor}
\usepackage{siunitx}
\usepackage{amsmath, amsfonts, amssymb, bm}
\usepackage{mathtools}
\usepackage{pdfpages}
\usepackage{hhline}
\usepackage{comment}
\usepackage{framed}
\usepackage{url}
\usepackage{dsfont}
\usepackage{algorithm}
\usepackage[noend]{algpseudocode}
\usepackage{bbold}
\usepackage{tikz}
\usepackage{subcaption}

\usepackage{titlesec}

\usepackage{ifthen}

\usepackage{amsthm}
\usepackage{multirow}



\newtheorem{definition*}{Definition}
\newtheorem{lemma*}{Lemma}
\newtheorem{theorem}{Theorem}
\newtheorem{corollary*}{Corollary}
\usepackage{lipsum}
\usepackage{comment}
\usepackage{ulem}
\usepackage{parskip} 

\DeclareMathOperator{\rank}{rank} 

\definecolor{brightpink}{rgb}{1.0, 0.0, 0.5}

\begin{document}
\def\viewPrev{0} 

\newcommand{\replace}[2]{\if\viewPrev1{\color{blue}\sout{#1}#2}\else\if\viewPrev2{\color{blue}#2}\else#2\fi\fi}

\newdimen\origiwspc%
\newdimen\origiwstr%
\origiwspc=\fontdimen2\font
\origiwstr=\fontdimen3\font
\title{ Least-squares methods for nonnegative matrix factorization over rational functions
\thanks{This work was supported by the Fonds de la Recherche Scientifique - FNRS and the Fonds Wetenschappelijk Onderzoek - Vlaanderen under EOS Project no 30468160, and by the Leuven Institute for Artificial Intelligence (Leuven.ai)}}
\fontdimen2\font=\origiwspc%
\fontdimen3\font=\origiwstr

\author{\IEEEauthorblockN{Cécile Hautecoeur,}
\and
\IEEEauthorblockN{Lieven De Lathauwer,}
\and\IEEEauthorblockN{Nicolas Gillis,}
\and\IEEEauthorblockN{François Glineur}
}

\newcommand{\Rplus}{\mathbb{R}_+}
\newcommand{\Rp}[2]{\Rplus^{#1 \times #2}}
\renewcommand{\F}{\mathcal{F}}
\renewcommand{\P}{\mathcal{R}}
\newcommand{\projP}[1]{\left[#1\right]_{\P}}

\newcommand{\pB}{\tilde g}

\newcommand{\Mij}[3]{#1_{#2#3}}
\newcommand{\X}{X}
\renewcommand{\Xi}[2]{\Mij{\X}{#1}{#2}}
\renewcommand{\A}{\mathcal{A}}
\newcommand{\Ai}[1]{A_{:#1}}
\newcommand{\Y}{\mathcal{Y}}
\newcommand{\Yi}[1]{Y_{:#1}}

\maketitle

\begin{abstract}
Nonnegative Matrix Factorization (NMF) models are widely used to recover linearly mixed nonnegative data. 
When the data is made of samplings of continuous signals, the factors in NMF can be constrained to be samples of nonnegative rational functions, which allow fairly general models; this is referred to as NMF using rational functions (R-NMF). 
We first show that, under mild assumptions, R-NMF has an essentially unique factorization unlike NMF, which is crucial in applications where ground-truth factors need to be recovered such as  blind source separation problems. Then we present different approaches to solve R-NMF
: the R-HANLS, R-ANLS and R-NLS methods. From our tests, no method significantly outperforms the others, and a trade-off should be done between time and accuracy. Indeed, R-HANLS is fast and accurate for large problems, while R-ANLS is more accurate, but also more resources demanding, both in time and memory. R-NLS is very accurate but only for small problems. Moreover, we show that R-NMF outperforms NMF in various tasks including the recovery of semi-synthetic continuous signals, and a classification problem of real hyperspectral signals.

\end{abstract}
\begin{IEEEkeywords}
nonnegative matrix factorization, block-coordinate-descent, sampled signals, nonlinear least squares, nonnegative rational functions, projection
\end{IEEEkeywords}

\section{Introduction}
Linear dimension reduction techniques are simple but powerful methods to reduce the size of a dataset while extracting meaningful information and filtering noise. 
When the data is nonnegative, it is common to use the Nonnegative Matrix Factorization (NMF). 
In NMF, the nonnegative input data matrix $Y$ is approximated by the product of two nonnegative matrices, $A$ and $X$, such that 
$Y\simeq AX^\top$. The number $r$ of columns of these two matrices must be much smaller than the dimensions of the input matrix, leading to a compressed representation. 
This allows the description of each column of $Y$ as a nonnegative weighted sum  of $r$ characteristic nonnegative elements, the columns of $A$~\cite{lee1999learning}. 

Nonnegativity constraints occur naturally in many situations, e.g., when recording intensities, occurrences, frequencies, proportions, and  probabilities. 
Imposing nonnegativity in the factorization leads to more meaningful decompositions: 
the basis formed by the column of $A$ can be interpreted in the same way as the data, while the input matrix $Y$ is reconstructed using only 
 additive combinations of these basis elements, which leads to a part-based representation~\cite{lee1999learning}. 
This explains the popularity of NMF in various fields such as image processing, text mining, blind source separation, and microarray data analysis; see~\cite{cichocki2009nonnegative, gillis2020nonnegative} and the references therein.

To further improve the quality of the factorization and be even less sensitive to noise, other constraints can be considered on factors $A$ and $X$. For example, when the data is smooth, one can consider that the columns of $A$ are discretizations of continuous nonnegative functions like polynomials~\cite{debals2017nonnegative}, splines~\cite{backenroth2018methods,zdunek2014alternating,zdunek2014b},  or mixture(s) of Gaussian radial basis functions~\cite{zdunek2012approximation}. NMF can then be solved in several ways, but an efficient approach is to generalize the Hierarchical Alternating Least Squares (HALS) algorithm~\cite{cichocki2007hierarchical} 
and solve the problem using block-coordinate descent (BCD) with $2r$ blocks: the columns of $A$ and $X$. This algorithm requires to repeatedly project each block on the considered set of nonnegative functions; for example on nonnegative polynomials and splines~\cite{hautecoeurNeuro}.

When the columns of the input matrix $Y$ are samples of nonnegative continuous signals, mostly smooth with possibly some peaks, it makes sense to consider that they are samples of nonnegtive rational functions. Indeed, when the denominator of a rational function is close to zero, it results in a peak in the signal.
In fact, rational functions are able to represent a large range of shapes and curves \cite{ionita2013lagrange}.
NMF over rational functions, R-NMF, has been introduced in~\cite{hautecoeur2021hierarchical}, and is recalled in Section~\ref{sec:rnmf}. In Section~\ref{sec:unique}, we prove that unlike standard NMF, R-NMF is essentially unique under mild conditions, which is very important when the objective is to recover the sources behind data.

In~\cite{hautecoeur2021hierarchical} it is shown that R-NMF leads to better factorization and reconstruction than standard NMF on noisy data. However, 
the set of nonnegative rational functions of fixed degree is not convex, and the projection on it is not easy to compute. Therefore, the problem is solved using an HALS-like approach, named R-HANLS, that uses an approximate projection method. We explore in Section~\ref{sec:projr} other methods to approximately  project on nonnegative rational functions, with the goal to determine whether some methods lead to better projections and/or if some are more adapted for R-NMF, i.e. lead to better factorizations. 

One of the main advantages of HALS for standard NMF is the simplicity of its iterations. 
However, when using rational functions, each iteration is difficult due to the projection. It is thus questionable whether this approach is suitable, 
so we consider other 
block decompositions 
in Section~\ref{sec:methods}, the R-ANLS and R-NLS methods. 
Methods are then analyzed and compared in Section~\ref{sec:compare}, where we find that R-HANLS is suited for large-scale data
, while R-ANLS obtains a more accurate factorization but is slower. 
R-NLS can only be used for very small data. Moreover, R-NMF is more accurate than NMF using polynomials, splines or vectors on various datasets, like semi-synthetic datasets containing mixture of real reflectance signals, and on a real problem, the Indian Pines classification problem. 

\section{NMF using rational functions}
\label{sec:rnmf}
Consider an input data matrix $Y \in \mathbb{R}^{m\times n}$, containing in each of its columns the samples of a continuous signal taken in $m$ known discretization points $\bm\tau = \{\tau_i\}_{i=1}^m \subset \mathbb{R}$, the sampling points need not to be taken equidistantly. Let $T$ be the interval on which $\bm \tau$ is defined: $T = [\tau_{\min},\tau_{\max}]$, and $\mathcal{F}^{\bm d,T}$ be the set of rational functions of degree $\bm d$ nonnegative on $T$. The goal of R-NMF is to approximate the columns of $Y$ 
 as a nonnegative linear combination of $r$ functions in $\mathcal{F}^{\bm d,T}$. However, 
as the input signals are known only at points $\bm \tau$, to evaluate the quality of the factorization, we focus on the discretization of $\mathcal{F}^{\bm d,T}$ on $\bm\tau: \ \mathcal{R}^{\bm d,T}_{\bm \tau}
=\{f(\bm \tau)|f\in \mathcal{F}^{\bm d,T}\} \subset \mathbb{R}_+^m$, and use the Frobenius norm $\|\cdot \|_F$ of the reconstruction error of $Y$ as objective.

\begin{definition*}[\textbf{R-NMF}] \label{def:f-nmf}  
Given an input matrix $Y\in \mathbb{R}^{m\times n}$, discretization points $\bm \tau \in \mathbb{R}^m$, the set $\P_{\bm \tau}^{\bm d,T}$ of rational functions of degree $\bm d$ nonnegative on $T$ and evaluated on $\bm \tau$, and a factorization rank $r\geq 1$. R-NMF aims to compute a nonnegative matrix $A\in \mathbb{R}^{m\times r}_+$ containing elements of $\P_{\bm \tau}^{\bm d,T}$ in each of its columns, i.e.\@ $A_{:j}\in \P_{\bm \tau}^{\bm d,T} \ \forall j$, 
and a nonnegative matrix $\X\in \Rp{n}{r}$ solving 
\begin{equation} \label{problem}\min_{\Ai{j}\in \P_{\bm \tau}^{\bm d,T}, \X\in \Rp{n}{r}} \sum_{i=1}^n \Big\|\Yi{i} - \sum_{j=1}^r \Ai{j}\Xi{i}{j} \Big\|^2_F.\end{equation}
\end{definition*}
The choice of rational functions is motivated by
their ability to represent a large range of shapes and their utility in applications; 
they generalize polynomials or splines~\cite{trefethen2019approximation}, and they represent the natural way of modeling linear dynamical systems in the frequency domain~\cite{ionita2013lagrange}.  
A rational function is defined as the ratio of two polynomials: $f(t) = \frac{h(t)}{g(t)}$. Throughout this work, we consider univariate rational functions with fixed degree $\bm d=(d_1,d_2)$, so that $h$ is of degree $d_1$ and $g$ of degree $d_2$. 
As the degree is fixed, the set of rational functions is not a vector space (it is easy to check that $\frac{1}{x}+\frac{1}{x+1}$ is of degree $(1,2)$ and not $(1,1)$).  

Nevertheless, this set can be parametrized. Indeed, a rational function nonnegative on a fixed interval can be described as a ratio of two polynomials nonnegative on the same interval~\cite{jibetean2006global}, and nonnegative polynomials can be parametrized using sums of squares~\cite{powers2000polynomials}. Moreover, as it is often undesirable for factors to tend to infinity, the denominator is imposed to be nonzero in the considered interval. More details are presented in \cite{hautecoeur2021hierarchical}. For example a rational function of degree $\bm d=(2d_1',2d_2')$ nonnegative on $[-1,1]$ can be written as: 

\begin{equation}
    \label{eq:ratio_continu} f(t) = \frac{h_1^2(t) + (1-t^2)h_2^2(t)}{g_1^2(t) +(1-t^2)g_2^2(t)+\epsilon}
\end{equation}
with $h_1,h_2,g_1,g_2$ polynomials of degree $d_1',d_1'-1, d_2', d_2'-1$ respectively, and $\epsilon$ prevents the denominator from going to 0. To evaluate $r$ on points $\bm \tau$, we use the Vandermonde-like matrix for the chosen basis of polynomials, $V^d_{\bm \tau}$ (in our case, the Chebyshev basis). Using the coefficients $(\bm h_1, \bm h_2, \bm g_1, \bm g_2) \in \mathbb{R}^{d_1'+1}\hspace*{-0.05cm}\times\hspace*{-0.05cm} \mathbb{R}^{d_1'}\hspace*{-0.05cm}\times\hspace*{-0.05cm} \mathbb{R}^{d_2'+1}\hspace*{-0.05cm}\times\hspace*{-0.05cm} \mathbb{R}^{d_2'} $ we have
\begin{equation}  \label{eq:ratio} \begin{matrix} f_{\bm\tau}(\bm{h_1},  \bm{h_2}, \\ \qquad \ \bm{g_1}, \bm{g_2}) \end{matrix}
=  \frac{(V^{d_1'}_{\bm \tau}\bm{h_1})^2 + (1-\bm \tau^2)\cdot(V^{d_1'-1}_{\bm\tau}\bm{h_2})^2}{(V^{d_2'}_{\bm \tau}\bm{g_1})^2 + (1-\bm \tau^2)\cdot(V^{g_2'-1}_{\bm \tau}\bm{g_2})^2 + \epsilon} \end{equation}

However, this representation is redundant, as multiplying the numerator and the denominator by the same constant leads to the same rational function. 
Therefore, we impose the denominator $g$ to be monic. It can be proven that this condition is equivalent to imposing $\bm{g_1}[d_2'+1] = \frac{\sqrt{8+\bm{g_2}[d_2']^2}}{2}$. 


\section{Uniqueness}
\label{sec:unique}
In this section, we focus on exact factorizations $Y=AX^\top$. In such a factorization, if the column $A_{:i}$ is scaled by a factor $\alpha_i$ while the column $X_{:i}$ is scaled by factor $\frac{1}{\alpha_i}$, $AX^\top$ remains unchanged. 
Moreover, applying the same permutation to the columns of $A$ and $X$ also keeps $AX^\top$ unchanged.
This defines essentially unique factorizations: 
\begin{definition*}
$Y=AX^\top$ is said to have an \textbf{essentially unique factorization} if 
 all the factorizations of $Y$ can be obtained only using consistent permutations and scalings/counterscaling of columns of $A/X$.
\label{def:essen_unique}
\end{definition*}


As shown in Lemma~\ref{lemma:QA}, a matrix $Y$ with factorization $Y\hspace*{-0.02cm}=\hspace*{-0.02cm}AX^\top\hspace*{-0.05cm}$ admits an infinite number of other factorizations not resulting from permutations and scalings. To have an essentially unique factorization, we must add constraints on $A$ and/or $X$. In NMF, the factors $A$ and $X$ are nonnegative. This constraint 
allows, under certain conditions, for an essentially unique factorization. However, these conditions are quite restrictive, and are not met in general, see~\cite{xiao2019uniq} and~\cite[Chap.~4]{gillis2020nonnegative} and the references therein.  
\begin{lemma*}
\label{lemma:QA}
Let $Y=AX^\top $, with $A\in \mathbb{R}^{m\times r}$, $X\in \mathbb{R}^{n\times r}$, and $\rank(Y) = r$. 
Matrices $A'\in \mathbb{R}^{m\times r}$ and $X'\in \mathbb{R}^{n\times r}$ factorize $Y$ if and only if $A'=AQ$ and $X'^\top=Q^{-1}X^{\top}$ where $Q\in \mathbb{R}^{r\times r}$ is an invertible matrix. 
\end{lemma*}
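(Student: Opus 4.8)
The plan is to prove the two implications separately, with the forward (``if'') direction being immediate and the reverse (``only if'') direction resting on a rank and column-space argument.

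For the ``if'' direction I would just substitute: if $A'=AQ$ and $X'^\top = Q^{-1}X^\top$ for some invertible $Q\in\mathbb{R}^{r\times r}$, then $A'X'^\top = AQQ^{-1}X^\top = AX^\top = Y$, so $(A',X')$ is a factorization of $Y$.

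For the ``only if'' direction, the first step is to note that the hypothesis $\rank(Y)=r$ forces all four matrices to have full column rank $r$: from $Y=AX^\top$ we get $r=\rank(Y)\le \min\{\rank(A),\rank(X)\}\le r$, hence $\rank(A)=\rank(X)=r$, and the same reasoning applied to $Y=A'X'^\top$ gives $\rank(A')=\rank(X')=r$. The second step is a column-space identification: since $A$ has full column rank, $\mathrm{col}(Y)=\mathrm{col}(AX^\top)\subseteq \mathrm{col}(A)$, and as both subspaces have dimension $r$ they coincide; likewise $\mathrm{col}(Y)=\mathrm{col}(A')$. Thus $A$ and $A'$ are two bases of the same $r$-dimensional subspace of $\mathbb{R}^m$, so there is a (unique) invertible $Q\in\mathbb{R}^{r\times r}$ with $A'=AQ$. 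The final step is a left cancellation: plugging $A'=AQ$ into $AX^\top = A'X'^\top = AQX'^\top$ and multiplying on the left by a left inverse of $A$ (for instance $(A^\top A)^{-1}A^\top$, which exists because $A$ has full column rank) gives $X^\top = QX'^\top$, i.e.\ $X'^\top = Q^{-1}X^\top$, as claimed.

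The argument is essentially routine linear algebra, so I do not expect a genuine obstacle; the only points that need care are the bookkeeping that makes the rank hypothesis $\rank(Y)=r$ do real work — it is what guarantees simultaneously that $A$ and $A'$ span the same column space and that $A^\top A$ is invertible, so that the left cancellation is legitimate — and the observation that $Q$ is forced to be invertible (which also follows a posteriori from $\rank(A')=r$). One could equally well phrase the last step using an arbitrary left inverse of $A$ rather than the Moore--Penrose pseudoinverse; the conclusion is unchanged.
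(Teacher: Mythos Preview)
Your argument is correct; the paper in fact omits the proof entirely, stating only that it is ``quite straightforward,'' so your column-space plus left-inverse argument is a perfectly good way to supply the details. The draft proof preserved (but commented out) in the paper's source proceeds slightly differently---it builds $Q$ directly via the pseudoinverses $(A'^\top A')^{-1}A'^\top$ and $X'(X'^\top X')^{-1}$ and then checks consistency---but this is the same routine linear algebra packaged another way, and your version is arguably cleaner since the column-space identification makes the invertibility of $Q$ immediate rather than something to verify after the fact.
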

\vspace*{-0.2cm}
\begin{proof}
We omit the proof, which is quite straightforward. 
\end{proof}
\vspace*{-0.2cm}

If we consider that the columns of matrix $A$ are samples of rational functions, it is possible to prove that the product $AX^\top$ is essentially unique under certain conditions on the rational functions contained in the columns of $A$~\cite{debals2015lowner}. Indeed, at most one column can contain a polynomial and
the poles of all rational functions must be distincts. The number of discretization points must also be greater than twice the sum of the degrees of the rational functions in $A$, for example $m>2r(d_1+d_2)$ in R-NMF.

In R-NMF, the considered rational functions must be nonnegative and of the same degrees. The exact R-NMF problem described below is thus a special case of~\cite{debals2015lowner}. Theorem~\ref{thm:unique} shows that it is possible to ensure that  exact R-NMF is essentially unique with milder conditions on $A$. 
\begin{definition*} \label{def:exact}
\textbf{Exact R-NMF} Given $Y \in \mathbb{R}_+^{m \times n}$, $\bm \tau$, $\P_{\bm \tau}^{\bm d,T}$ and $r$ as in R-NMF, compute, if possible, $A\in \mathbb{R}_+^{m\times r}$ with $A_{:j}\in \P_{\bm \tau}^{\bm d,T}$ for all $j$ and $X\in \mathbb{R}_+^{n\times r}$ such that $Y = AX^\top$.  
\end{definition*} 

Let us introduce some 
lemmas and notations. A rational function $f(t)$ of degree $\bm d = (d_1,d_2)$ can be written as: 
\begin{equation}
    \label{eq:ratio_fun} \hspace*{-0.2cm} f(t) = \frac{\alpha \prod_{i=1}^{d_1}(t-z_i)}{\prod_{j=1}^{d_2}(t-p_j)} \quad z_i,p_j \in \mathbb{C}, z_i\neq p_j  \ \forall i,j,  \alpha \neq 0, 
\end{equation}
with $\mathcal{Z}\hspace*{-0.1cm}=\hspace*{-0.1cm} \{z_i\}_{i=1}^{d_1}$ the zeros of $f(t)$, and $\mathcal{P}\hspace*{-0.1cm}=\hspace*{-0.1cm} \{p_j\}_{j=1}^{d_2}$ its poles, including the complex zeros/poles. 
In case of multiple poles, the poles are considered as distinct. Let $f_1, \ f_2 $ be two rational functions with poles $\mathcal{P}_1\hspace*{-0.1cm}=\hspace*{-0.05cm}\{p_1,p_2,p_3\}$ with $p_1\hspace*{-0.1cm}=\hspace*{-0.05cm}p_2\hspace*{-0.1cm}=\hspace*{-0.05cm}p_3$ and   $\mathcal{P}_2\hspace*{-0.05cm} =\hspace*{-0.05cm} \{p_1,p_2\}$ respectively.
The set of all poles is $\{p_1,p_2,p_3\}$ and the set of unique poles, i.e. poles appearing in exactly one function is $\{p_3\}$.

\begin{lemma*}
\label{lem:unique}
Let $\{f_l\}_{l=1}^{r}$ be a collection of rational functions in the form (\ref{eq:ratio_fun}), with $\mathcal{P}_l = \{p_{lj}\}_{j=1}^{d_2}$ holding the poles of $f_l$ and $\mathcal{Z}_l = \{z_{li}\}_{i=1}^{d_1}$ holding the zeros of $f_l$. 
Let $\mathcal{S} = \{s_k\}_{k=1}^m$ be the set of unique poles, i.e. poles appearing in exactly one function.\\
Then any function $f\hspace*{-0.05cm}=\hspace*{-0.05cm}\sum_l \beta_l f_l$ with $\beta_l\hspace*{-0.05cm}\neq\hspace*{-0.05cm} 0$ has a denominator with degree at least equal to the cardinality of $\mathcal{S}$ (=$m$).
\end{lemma*}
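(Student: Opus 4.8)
The plan is to argue one complex point at a time, using only the elementary fact that a rational function written in lowest terms $N/D$ has $\deg D$ equal to the number of its poles counted with multiplicity. Hence it suffices to show that $f=\sum_{l=1}^r\beta_l f_l$ has at least $|\mathcal{S}|=m$ poles in $\mathbb{C}$, counted with multiplicity. As a preliminary remark I would note that, because the representation (\ref{eq:ratio_fun}) forces $z_{li}\neq p_{lj}$, each $f_l$ really does have a pole of the prescribed multiplicity at every element of $\mathcal{P}_l$; in particular the only candidate poles of $f$ lie in the finite set $\bigcup_l\mathcal{P}_l$.

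Next, fix $\zeta\in\mathbb{C}$ and let $\mu_1(\zeta)\ge\mu_2(\zeta)$ be the largest and second-largest multiplicities with which $\zeta$ occurs as a pole among $f_1,\dots,f_r$ (read off the lists $\mathcal{P}_l$ with repetition, with $\mu_2(\zeta)=0$ when only one function has $\zeta$ as a pole). The key combinatorial observation is that $\zeta$ contributes exactly $\mu_1(\zeta)-\mu_2(\zeta)$ elements to $\mathcal{S}$ — in the example of the statement the value $p$ occurs with multiplicities $3$ and $2$ and contributes the single pole $p_3$ — so that $m=\sum_{\zeta}(\mu_1(\zeta)-\mu_2(\zeta))$. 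I then claim that $f$ has a pole at $\zeta$ of order at least $\mu_1(\zeta)-\mu_2(\zeta)$. This is vacuous when $\mu_1(\zeta)=\mu_2(\zeta)$. When $\mu_1(\zeta)>\mu_2(\zeta)$, a single function $f_{l^\star}$ attains the top multiplicity $\mu_1(\zeta)$; multiplying by $(t-\zeta)^{\mu_2(\zeta)}$ turns every term with $l\neq l^\star$ into a function analytic at $\zeta$, while $(t-\zeta)^{\mu_2(\zeta)}f_{l^\star}$ still has a pole of order $\mu_1(\zeta)-\mu_2(\zeta)\ge 1$ there (again using $z_{l^\star i}\neq p_{l^\star j}$, so its numerator does not vanish at $\zeta$). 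Thus $(t-\zeta)^{\mu_2(\zeta)}f=\beta_{l^\star}(t-\zeta)^{\mu_2(\zeta)}f_{l^\star}+(\text{a function analytic at }\zeta)$, and since $\beta_{l^\star}\neq 0$ the principal part cannot cancel: $f$ has a pole at $\zeta$ of order exactly $\mu_1(\zeta)$, in particular at least $\mu_1(\zeta)-\mu_2(\zeta)$.

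Summing this local bound over $\zeta\in\bigcup_l\mathcal{P}_l$ then finishes the argument: the degree of the reduced denominator of $f$ — hence of any denominator — equals $\sum_\zeta(\text{pole order of }f\text{ at }\zeta)\ge\sum_\zeta(\mu_1(\zeta)-\mu_2(\zeta))=|\mathcal{S}|=m$.

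The step I expect to require the most care is the bookkeeping with multiple poles: one must make precise that "a pole appearing in exactly one function'' means the \emph{excess} multiplicity $\mu_1(\zeta)-\mu_2(\zeta)$, and then observe that the only configuration in which the coefficients $\beta_l$ could conspire to annihilate the leading term at $\zeta$ is when two or more functions share the top multiplicity — but in that case $\zeta$ contributes nothing to $\mathcal{S}$, so such cancellations never hurt the bound. Beyond that, the content is just the short non-cancellation step above, which uses nothing more than $\beta_{l^\star}\neq 0$ together with the separation of zeros and poles inside each $f_l$.
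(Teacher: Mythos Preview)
Your proof is correct and follows essentially the same idea as the paper's: both hinge on the observation that at a value $\zeta$ where a \emph{single} function $f_{l^\star}$ attains the top pole multiplicity, the corresponding principal part of $f$ cannot be cancelled because $\beta_{l^\star}\neq 0$. The paper packages this via a global common denominator $\prod_{q\in\mathcal{U}}(t-q)$ and checks that its numerator does not vanish at each $s_k\in\mathcal{S}$, whereas you work locally with Laurent/pole orders; these are equivalent. Your explicit identity $|\mathcal{S}|=\sum_\zeta(\mu_1(\zeta)-\mu_2(\zeta))$ makes the multiplicity bookkeeping more transparent than the paper's labeled-multiset convention, and in fact your argument yields the slightly sharper conclusion that the pole order of $f$ at such $\zeta$ is exactly $\mu_1(\zeta)$, not merely $\mu_1(\zeta)-\mu_2(\zeta)$.
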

\begin{proof}
\vspace*{-0.3cm}
The function $f$ can be written as:
$$ f(t) = \frac{\sum_l \beta_l \alpha_l \prod_{i=1}^{d_1}(t-z_{li}) \prod_{q \in 
{\mathcal{U}\setminus \mathcal{P}_l}}(t-q)}{\prod_{q \in 
{\mathcal{U}}}(t-q)}. \vspace*{-0.1cm}$$
Let $\mathcal{U}$ be the set of all poles in $\{f_l\}_{l=1}^{r}$. We have $\mathcal{S} \subseteq \mathcal{U}$, and all $s_k$ are therefore potential poles of $f$. Let us check if they can be simplified by the numerator or not.
If $s_k\in \mathcal{S}$ is a pole appearing only in $\mathcal{P}_l$, we have $s_k \in \mathcal{U}\setminus \mathcal{P}_i \ \forall i \neq l$. Therefore, when $t=s_k$, only the $l^\text{th}$ term is non-zero in the numerator.
Moreover, $s_k \notin \mathcal{U}\setminus \mathcal{P}_l$ and $s_k \neq z_{li} \ \forall i$ as $s_k$ is a pole of $f_l$. The numerator is therefore nonzero  when $t = s_k$ and $s_k$ is a pole of $f$. As this is valid for all $s_k\in \mathcal{S}$, rational function $f$ has denominator degree at least equal to the cardinality of $\mathcal{S} = m$.
\end{proof}

\begin{lemma*}
\label{lem:maxdeg}
Let $\{f_l\}_{l=1}^r$ be a collection of $r$ rational functions in form (\ref{eq:ratio_fun}), of degree $\bm d = (d_1,d_2)$, and $\bm \tau =\{\tau_i\}_{i=1}^m$ be a set of distinct discretization points with 
$m>d_1+rd_2$, so that the denominators of functions $f_l$ do not cancel at these points. If there exist a rational function $f^*$ of degree $\bm d$ so that $f^*(\bm\tau) = \sum_{l=1}^r \beta_lf_l(\bm\tau)$, then $f^* = \sum_{l=1}^r \beta_lf_l$.
\end{lemma*}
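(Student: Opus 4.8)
The plan is to reduce the claimed equality of rational functions to an equality of polynomials, which then follows from a root-counting (Vandermonde/interpolation) argument. First I would put both sides over common denominators: write $f^* = h^*/g^*$ with $\deg h^*\le d_1$ and $\deg g^*\le d_2$, and each $f_l = h_l/g_l$ with $\deg h_l\le d_1$, $\deg g_l\le d_2$, as in (\ref{eq:ratio_fun}). Setting $q := \prod_{l=1}^{r} g_l$ and $p := \sum_{l=1}^{r}\beta_l h_l\prod_{l'\neq l}g_{l'}$, we get $\sum_{l=1}^{r}\beta_l f_l = p/q$ with $\deg q\le rd_2$ and $\deg p\le d_1+(r-1)d_2$. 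By hypothesis the denominators $g_l$ do not vanish on $\bm\tau$, hence neither does $q$; and since the statement presupposes that $f^*(\bm\tau)$ is well defined, $g^*$ does not vanish on $\bm\tau$ either.

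Next I would consider the polynomial $P := h^*q - p\,g^*$. For each $i=1,\dots,m$, the identity $f^*(\tau_i)=\sum_l\beta_l f_l(\tau_i)$ reads $h^*(\tau_i)/g^*(\tau_i) = p(\tau_i)/q(\tau_i)$; clearing the (nonzero) denominators gives $h^*(\tau_i)q(\tau_i) = p(\tau_i)g^*(\tau_i)$, i.e.\ $P(\tau_i)=0$. Thus $P$ vanishes at the $m$ distinct points $\bm\tau$. On the other hand $\deg P \le \max\{\deg h^*+\deg q,\ \deg p + \deg g^*\} \le d_1+rd_2 < m$, so $P$ has strictly more roots than its degree and must be identically zero.

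Finally, $P\equiv 0$ means $h^*q = p\,g^*$ as polynomials, hence $f^* = h^*/g^* = p/q = \sum_{l=1}^{r}\beta_l f_l$ as rational functions, which is exactly the claim. I do not expect a genuine obstacle here: the argument is just the uniqueness of rational interpolation with bounded numerator and denominator degrees. The only points needing care are the degree accounting that makes the bound $m>d_1+rd_2$ exactly tight, and noting explicitly that the hypothesis $f^*(\bm\tau)=\sum_l\beta_l f_l(\bm\tau)$ already forces the denominator of $f^*$ to be nonzero on $\bm\tau$, so the clearing-of-denominators step is legitimate.
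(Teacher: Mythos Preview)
Your proposal is correct and follows essentially the same route as the paper: both arguments clear denominators, obtain a polynomial identity of degree at most $d_1+rd_2$ that vanishes at the $m>d_1+rd_2$ distinct points $\bm\tau$, and conclude it is identically zero. Your presentation is slightly tidier in naming $P=h^*q-pg^*$ and in making explicit why $g^*$ cannot vanish on $\bm\tau$, but the underlying idea is identical.
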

\begin{proof}
Let $\mathcal{Z}_l =\{z_{li}\}_{i=1}^{d_1}$ and $\mathcal{P}_l = \{p_{lj}\}_{j=1}^{d_2}$ be the zeros and the poles of $f_l$ and $\tilde{\mathcal{Z}} =\{\tilde{z}_{i}\}_{i=1}^{d_1}$ and $\tilde{\mathcal{P}} = \{\tilde{p}_{j}\}_{j=1}^{d_2}$ be the zeros and poles of $f^*$. We have
\begin{align} 
& \hspace*{2cm}f^*(\bm \tau) = \sum_{l=1}^r \beta_l f_l(\bm \tau) \nonumber \\
&\hspace*{-0.05cm}\Leftrightarrow  
\frac{\tilde \alpha \prod_{i=1}^{d_1}(\bm \tau-\tilde z_i)}{\prod_{j=1}^{d_2}(\bm \tau-\tilde p_j)} = 
\frac{\sum_{l} \beta_l \alpha_l \prod_{i}(\bm \tau -z_{li}) \prod_{k\neq l, j}(\bm \tau-p_{kj})}{\prod_{l=1}^r\prod_{j=1}^{d_2}(\bm \tau-p_{lj})} \nonumber\\
&\hspace*{-0.05cm}\Leftrightarrow 
\Big(
\tilde \alpha \prod_{i=1}^{d_1}(\bm \tau-\tilde z_i)\Big) \Big(\prod_{l=1}^r\prod_{j=1}^{d_2}(\bm \tau-p_{lj})\Big) =  \label{pol1}\\
&\hspace*{-0.05cm} \ \Big(\prod_{j=1}^{d_2}(\bm \tau-\tilde p_j)\Big) \Big( \sum_{l=1}^r \beta_l \alpha_l \prod_{i=1}^{d_1}(\bm \tau -z_{li}) \prod_{k\neq l}^r\prod_{j=1}^{d_2}(\bm \tau-p_{kj}) \Big) \hspace*{-0.5cm} \label{pol2} \end{align}

Elements (\ref{pol1}) and (\ref{pol2}) are polynomials of degree at most $d_1+rd_2$, evaluated at discretization points $\bm \tau$. As $\bm \tau$ contains $m$ distinct points with 
$m>d_1+rd_2$, these two polynomials must be equal everywhere. Therefore, $f^* = \sum_{l=1}^r \beta_l f_l$.
\end{proof}

We now present conditions on matrices $A$ and $X$ that imply that the exact R-NMF $AX^\top$ is essentially unique.
\begin{theorem}
Let $A\in \mathbb{R}^{m\times r}$ and $X\in \mathbb{R}^{n\times r}$ be of rank~$r$. Suppose all columns of $A$ are the discretizations of rational functions $A_j$ for $j=1,2,\dots,r$, of degree $(d_1,d_2)$ on $m$ distinct points $\bm \tau = \{\tau_i\}_{i=1}^m$, with 
$m > d_1+rd_2$ and $\bm \tau$ not containing poles of the functions $A_j$.  Suppose that for all sets containing 2 functions or more, they are at least $d_2+1$ unique poles, i.e. poles  appearing in exactly one function. 
Then the exact R-NMF $AX^\top$ is essentially unique. 
\label{thm:unique}
\end{theorem}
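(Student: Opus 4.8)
The plan is to combine Lemma~\ref{lemma:QA} with Lemmas~\ref{lem:unique} and~\ref{lem:maxdeg} and show that the only admissible change of basis between two exact R-NMF factorizations is a generalized permutation (monomial) matrix. First I would invoke Lemma~\ref{lemma:QA}: since $A$ and $X$ have rank $r$, $Y=AX^\top$ has rank $r$, so any other factorization $A'X'^\top = Y$ with $A'\in\mathbb{R}^{m\times r}$, $X'\in\mathbb{R}^{n\times r}$ automatically has rank $r$ and must satisfy $A'=AQ$, $X'^\top = Q^{-1}X^\top$ for some invertible $Q\in\mathbb{R}^{r\times r}$. It then suffices to prove that any such $Q$ producing an admissible $A'$ --- meaning every column of $A'$ lies in $\P_{\bm\tau}^{\bm d,T}$, in particular is the sampling of a rational function of degree $(d_1,d_2)$ --- must be a monomial matrix, i.e.\ have exactly one nonzero entry in each row and column.

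Next I would argue by contradiction on a single column of $Q$. Suppose column $k$ of $Q$ has support $L=\{l:Q_{lk}\neq 0\}$ with $|L|\geq 2$ (the support cannot be empty, since $Q$ is invertible). Then $A'_{:k}=\sum_{l\in L}Q_{lk}A_{:l}=\sum_{l\in L}Q_{lk}A_l(\bm\tau)$, and by admissibility $A'_{:k}=f^*(\bm\tau)$ for some rational function $f^*$ of degree $(d_1,d_2)$. Because $|L|\leq r$ gives $m>d_1+rd_2\geq d_1+|L|d_2$, and $\bm\tau$ avoids the poles of the $A_l$, Lemma~\ref{lem:maxdeg} applied to the collection $\{A_l\}_{l\in L}$ upgrades this sampled identity to a functional one: $f^*=\sum_{l\in L}Q_{lk}A_l$.

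Then I would apply Lemma~\ref{lem:unique} to the same collection $\{A_l\}_{l\in L}$, all coefficients $\beta_l=Q_{lk}$ being nonzero: the hypothesis of the theorem guarantees that, since $|L|\geq 2$, this collection has at least $d_2+1$ unique poles, so $f^*=\sum_{l\in L}Q_{lk}A_l$ has a denominator of degree at least $d_2+1$. This contradicts $f^*$ having degree $(d_1,d_2)$, hence denominator degree at most $d_2$. Therefore every column of $Q$ has exactly one nonzero entry; since $Q$ is invertible the same holds for its rows, so $Q=PD$ for a permutation matrix $P$ and an invertible diagonal matrix $D$. Substituting back into $A'=AQ$ and $X'^\top=Q^{-1}X^\top$ shows that $A'$ and $X'$ are obtained from $A$ and $X$ by a consistent permutation and scaling/counter-scaling of columns, which is precisely essential uniqueness.

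I expect the delicate points to be bookkeeping rather than conceptual: verifying that the two auxiliary lemmas, stated for collections of $r$ functions, apply verbatim to sub-collections of size $|L|\leq r$ once one notes $m>d_1+rd_2\geq d_1+|L|d_2$; reading ``unique pole'' consistently as a pole appearing in exactly one function of the sub-collection $\{A_l\}_{l\in L}$, which is exactly what the theorem's hypothesis supplies for every subset of size at least $2$; and observing that admissibility of $A'$ only invokes the degree bound, so nonnegativity of $A'$ and $X'$ plays no further role beyond what is already encoded in $\P_{\bm\tau}^{\bm d,T}$.
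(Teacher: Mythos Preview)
Your proposal is correct and follows essentially the same route as the paper: invoke Lemma~\ref{lemma:QA} to write $A'=AQ$, then use Lemma~\ref{lem:maxdeg} to pass from the sampled identity to a functional one, and Lemma~\ref{lem:unique} together with the unique-poles hypothesis to derive a denominator-degree contradiction whenever a column of $Q$ has at least two nonzero entries. If anything, your bookkeeping is slightly more careful than the paper's (explicitly restricting to the sub-collection $\{A_l\}_{l\in L}$ so that all coefficients are nonzero, and checking $m>d_1+|L|d_2$), but the argument is the same.
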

\begin{proof} 
Let $A',X'$ be such that $A'X'^\top\hspace*{-0.1cm}= AX^\top\hspace*{-0.1cm}$. As $A,X$ are of rank $r$, we know by Lemma~\ref{lemma:QA} that each column $A'_{:j}$ can be written as a linear combination of the columns of $A$:
$A'_{:j} = \sum_{l=1}^r \beta_l A_{:l} = \sum_{l=1}^r \beta_l A_l(\bm \tau).$
To be valid, $A'_{:j}$ must be the discretization of a rational function of degree $(d_1,d_2)$, we name this function $A'_j$. 
As $m>d_1+rd_2$, by Lemma (\ref{lem:maxdeg}), $A'_j$ must be the linear combination of the rational functions in $A$: $A'_j = \sum_l\beta_l A_l$. 

To avoid the trivial case of permutation and scaling, there must be at least one $A'_{:j}$ that is the combination of two or more columns of $A$. As all sets $\{A_i\}$ containing two functions or more have at least $d_2+1$ unique poles, using Lemma~\ref{lem:unique} we know that $A'_j$ has denominator degree at least $d_2+1$. This is in contradiction with the fact that $A'_j$ is a rational function with degree $(d_1,d_2)$. It is therefore not possible to find a valid and not trivial $A'$ so that $A'X'^\top  = AX^\top $ and the factorization $AX^\top $ is essentially unique.
\end{proof}
\begin{corollary*}
\label{col:unique}
Let $A\in\mathbb{R}^{m\times r}$ and $X\in \mathbb{R}^{n\times r}$ be of rank $r$, with the columns of $A$ obtained through evaluation of rational functions of degree $(d_1,d_2)$ on $m$ distinct points $\bm \tau$, 
with $m\hspace*{-0.1cm}>\hspace*{-0.1cm} d_1\hspace*{-0.1cm}+\hspace*{-0.1cm}rd_2$, and $\bm \tau$ not containing poles of functions in $A$.   If each function  has at least $\big\lceil \frac{d_2+1}{2}\big\rceil$ poles distinct from all other functions, the exact R-NMF 
is essentially unique.
\end{corollary*}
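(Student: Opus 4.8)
The plan is to deduce the corollary directly from Theorem~\ref{thm:unique}: its hypothesis requires that every set of two or more of the functions $A_j$ has at least $d_2+1$ unique poles, and I want to show this is implied by the (seemingly weaker) assumption that each individual function owns at least $\big\lceil\frac{d_2+1}{2}\big\rceil$ poles not shared with any other function in the collection. Once that implication is in place, the remaining hypotheses of the theorem — $A,X$ of rank $r$, the points $\bm\tau$ distinct with $m>d_1+rd_2$, and $\bm\tau$ avoiding the poles — are literally the hypotheses of the corollary, so the conclusion follows.

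First I would fix an arbitrary subset $S\subseteq\{A_1,\dots,A_r\}$ with $|S|\geq 2$ and select two distinct functions $A_a,A_b\in S$. By assumption $A_a$ has a set $P_a$ of at least $\big\lceil\frac{d_2+1}{2}\big\rceil$ poles that are poles of no other function in the whole collection; in particular none of them is a pole of any function of $S$ other than $A_a$, so each element of $P_a$ is a pole appearing in exactly one function of $S$. The same applies to a set $P_b$ of poles private to $A_b$. Since no element of $P_a$ is a pole of $A_b$ and vice versa, $P_a\cap P_b=\emptyset$, so $S$ has at least $2\big\lceil\frac{d_2+1}{2}\big\rceil$ unique poles (multiple poles counted as distinct, consistently with Lemma~\ref{lem:unique}). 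I would then note the elementary inequality $2\big\lceil\frac{d_2+1}{2}\big\rceil\geq d_2+1$, with equality when $d_2$ is odd and left-hand side $d_2+2$ when $d_2$ is even. Hence $S$ has at least $d_2+1$ unique poles; as $S$ was arbitrary, the hypothesis of Theorem~\ref{thm:unique} holds, and the theorem gives that the exact R-NMF $AX^\top$ is essentially unique.

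The argument is essentially bookkeeping, so there is no genuine obstacle; the only point needing care is the distinction between a pole being "private within the subset $S$" and "private within the full collection of $r$ functions." The corollary's assumption is about the latter, stronger notion, and one must observe that a pole private to $A_a$ in the full collection is a fortiori private to $A_a$ within any subset $S$ containing $A_a$ — which is exactly why the count transfers. A secondary bit of care: the thresholds $\big\lceil\frac{d_2+1}{2}\big\rceil$ and $d_2+1$ must be read with multiple poles treated as distinct (as in the convention preceding Lemma~\ref{lem:unique}), so that the disjoint-union count $2\big\lceil\frac{d_2+1}{2}\big\rceil$ is legitimate.
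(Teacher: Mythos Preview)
Your proposal is correct and is precisely the intended derivation: the paper states Corollary~\ref{col:unique} as an immediate consequence of Theorem~\ref{thm:unique} without spelling out a proof, and the implication you verify---that two functions in any subset $S$ already contribute $2\big\lceil\frac{d_2+1}{2}\big\rceil\geq d_2+1$ poles unique within $S$---is exactly the missing link. Your remarks on the ``private in the full collection $\Rightarrow$ private in $S$'' step and on the multiplicity convention are the only points requiring care, and you handle both correctly.
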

Note that the nonnegativity constraint is not necessary for Theorem \ref{thm:unique} and Corollary \ref{col:unique}. 
Nevertheless, when using representation like (\ref{eq:ratio}), functions $A_j$ does not have real poles on interval $T = [\bm\tau_\text{min},\bm\tau_\text{max}]$, thanks to the $\epsilon$ added to the denominator. This means that in this case condition “$\bm \tau$ not containing poles of functions in $A$” is always met.

\section{Algorithms for R-NMF}\label{sec:methods}

In this section, we present three different block decompositions of R-NMF leading to different algorithms.
\begin{figure}[h!]
\vspace*{-0.3cm}
    \centering
    \begin{tikzpicture}
        \node[rotate=90] at (-0.65,0.75){m};
        \node at (0.25,1.65){n};
        \draw[draw=black] (-0.5,0) rectangle (1,1.5);
        \node at (0.25,0.75) {$AX^\top$};
        \node at (0.25,-0.25) {1 block};
        \node at (1.25,0.75) {$=$};
        \draw[draw=black] (1.5,0) rectangle (2,1.5);
        \node[rotate=90] at (1.35,1.2){m};
        \node at (1.75,1.65) {r};
        \node at (2.85,1.65) {n};
        \node at (3.75,1.35) {r};
        \node at (1.75,0.75) {$A$};
        \draw[draw=black] (2.1,1) rectangle (3.6,1.5);
        \node at (2.85,1.25) {$X^\top$};
        \node at (2.5,-0.25) {2 blocks};
        \node at (3.85,0.75) {$=$};
        \node at (4.5,1) {\small $A_{:1}$};
        \node at (5,1.75){\small ${X_{:1}}^\top$};
        \draw[draw=black] (4.1,0) rectangle (4.2,1.5);
        \draw[draw=black] (4.3,1.4) rectangle (5.8,1.5);
        \node at (5.6,0.75) {$+ \cdots +$};
        \draw[draw=black] (6.3,0) rectangle (6.4,1.5);
        \draw[draw=black] (6.5,1.4) rectangle (8,1.5);
        \node at (6.7,1) {\small $A_{:r}$};
        \node at (7.25,1.75){\small ${X_{:r}}^\top$};
        \node at (6.15,-0.25) {2$r$ blocks};
    \end{tikzpicture}
    \vspace*{-0.5cm}
    \caption{ Illustration of the three block-decomposition.}
    \label{fig:illuBloc}
\end{figure}
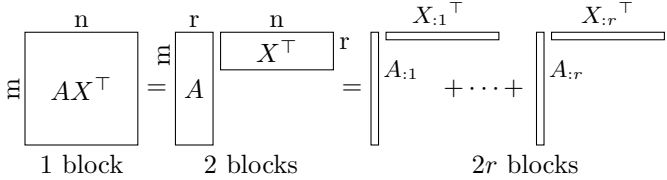

\vspace*{-0.4cm}
\subsection{General Nonlinear Least Squares approach (R-NLS)}
We substitute in~(\ref{problem}) $\Ai{j}$ by $f_{{\bm \tau}_j}$ from equation~(\ref{eq:ratio}), and   $\Xi{i}{j}$ by $C_{ij}^2$  to express R-NMF in an unconstrained way:
\vspace*{-0.2cm}
\begin{equation}
    \label{eq:unconstrained}
    \min_{\substack{\bm{h_{1j}},\bm{h_{2j}},\\ \bm{g_{1j}},\bm{g_{2j}},C}} \sum_{i=1}^n \Big\|\Yi{i} - \sum_{j=1}^r f_{{\bm \tau}_j}(\bm{h_{1j}},\bm{h_{2j}},\bm{g_{1j}},\bm{g_{2j}})C_{ij}^2 \Big\|^2.
\end{equation}
This problem can be solved using a standard nonlinear least squares solver. The same approach for polynomials has been proposed in \cite{debals2017nonnegative}. Note however that in the cited work a compression method is suggested to pre-process the data and reduce the complexity of the problem, but this is not possible in our case because rational function are not linearly parametrizable, that is, they cannot be described using a linear combination of some basis elements. \vspace*{-0.1cm} 
\subsection{Using Alternating Nonlinear Least Squares (R-ANLS)}
Using all-at-once algorithms as R-NLS to solve NMF problems may be computationally costly, especially for large problems. Therefore, many NMF algorithms consider instead alternating schemes \cite{cichocki2007hierarchical},\cite{kim2008nonnegative},\cite{lee1999learning},\cite{lin2007projected}. The problem is then solved by alternating on $A$ and $X$ considering the other matrix as fixed, as sketched in Algorithm~\ref{alg:als}.  As $f_{\bm \tau_j}$ is a nonlinear function, each sub-problem is nonlinear, and this method is called alternating nonlinear least squares. 
\begin{algorithm}[ht]
\caption{Alternating Nonlinear Least Squares}
\label{alg:als}
\begin{algorithmic}
	\Function{R-ANLS}{$Y, \ A,\ X$}
	\While{Stop condition not encountered}:
		\State 
		\vspace*{-0.3cm}\begin{align}\label{eq:upA_als}A \hspace*{-0.1cm} \leftarrow & \underset{\substack{\bm{h_{1j}},\bm{h_{2j}},\\ \bm{g_{1j}},\bm{g_{2j}}}}{\text{argmin}} \hspace*{-0.1cm}\sum_{i=1}^n \hspace*{-0.05cm}\Big\|\Yi{i} \hspace*{-0.1cm} - \hspace*{-0.1cm} \sum_{j=1}^r f_{{\bm \tau}_j}( \bm{h_{1j}},\bm{h_{2j}},\bm{g_{1j}},\bm{g_{2j}})X_{ij} \Big\|^2 \hspace*{-0.2cm}\\ \label{eq:upX_als} \vspace*{-0.2cm}
		X \hspace*{-0.1cm}\leftarrow & \quad \bigg(\ \underset{C\in \mathbb{R}^{n\times r}}{\text{argmin}} \quad  \ \sum_{i=1}^n \Big\|\Yi{i} - \sum_{j=1}^r \Ai{j}C_{ij}^2 \Big\|^2 \bigg)^2\end{align}
	\EndWhile
	\vspace*{-0.1cm}
	\State \textbf{return } $A, \ X$
	\EndFunction
\end{algorithmic}
\end{algorithm}

Problems~(\ref{eq:upA_als}) and  (\ref{eq:upX_als}) are unconstrained and can be solved using a standard nonlinear least squares solver. 
Note that problem (\ref{eq:upX_als}) is separable in $n$ independent  sub-problems, as the rows of $X$, are independent (which is not the case for the rows of $A$): 
\vspace*{-0.1cm}
$$
X_{i:}\leftarrow \ \bigg(\underset{C_{i:}\in \mathbb{R}^r}{\text{argmin}} \quad  \ \Big\|\Yi{i} - \sum_{j=1}^r \Ai{j}C_{ij}^2 \Big\|^2\bigg)^2 \quad \forall i \in \{1,\cdots, n\}.
$$
\subsection{Using Hierarchical Alternating Nonlinear Least Squares (R-HANLS)} 

A popular and effective approach for NMF is  the Hierarchical Alternating Least Squares method (HALS). This method further decomposes the problem in smaller blocks:
the columns of $A$/$X$ are updated successively, considering all the other elements as fixed \cite{cichocki2007hierarchical}; see also~\cite{gillis2012accelerated}. 
 Because of the quadratic structure of the objective function, minimizing~(\ref{problem}) when all variables are fixed except a column of $A$ or $X$ can be done by projecting the unconstrained minimizer on the corresponding feasible region. This region is the set $ \P_{\bm \tau}^{\bm d,T}$ of nonnegative rational functions with fixed degrees (for $A$), or the set $\mathbb{R}_+^{n}$ of nonnegative vectors (for $X$). 
 
 The unconstrained minimizer can easily be found for columns of $A$ and $X$ by cancelling the gradient. Algorithm~\ref{alg:r-hanls} sketches this approach, using $[\cdot]_{S}$ for the projection on set $S$. The projection on $\mathbb{R}_+^n$ is a simple thresholding operation, setting all negative values to 0, while the projection on $ \P_{\bm \tau}^{\bm d,T}$ is not trivial and discussed in the next section. Moreover, equation~(\ref{eq:upX_hals}) is separable: the value of $\Xi{i}{s}$ can be computed independently from $\Xi{j}{s}$, but this is not the case for $\Ai{s}$ in equation~(\ref{eq:upA_hals}), as the projection is not separable unlike the thresholding operation. 
\begin{algorithm}[ht]
\caption{R-HANLS}
\label{alg:r-hanls}
\begin{algorithmic}
	\Function{R-HANLS}{$Y, \ A,\ X$}
	\While{Stop condition not met}
	\For{$\Ai{s} \in A$}
	\vspace*{-0.5cm}
		\State \begin{align}\label{eq:upA_hals} \Ai{s} \leftarrow & \bigg[ \frac{Y\Xi{:}{s}-\sum_{j\neq s}\Ai{j}(\Xi{:}{j})^\top\Xi{:}{s}}{\|\Xi{:}{s}\|^2} \bigg]_{ \P_{\bm \tau}^{\bm d,T}} \hspace*{-0.6cm}
		\end{align}
	\EndFor
	\vspace*{-0.3cm}
	\For{$\Xi{:}{s} \in X$}
	    \begin{align} \label{eq:upX_hals}
		\Xi{:}{s} \leftarrow &  \bigg[ \frac{ Y^\top A_{:s} - \sum_{j\neq s} X_{:j}(\Ai{j})^\top \Ai{s}}{\|\Ai{s}\|^2} \bigg]_{\mathbb{R}^n_+} \end{align}
	\vspace*{-0.2cm}
	\EndFor
	\EndWhile
	\State \textbf{return } $A, \ X$
	\EndFunction
\end{algorithmic}
\end{algorithm}

\section{\hspace*{-0.15cm}Projection on nonnegative rational functions}\label{sec:projr}
As mentioned in Section~\ref{sec:rnmf}, rational functions nonnegative on a fixed interval $T$ can be described as a ratio of two polynomials nonnegative on $T$, with denominator further imposed to be nonzero on $T$. Let $\mathcal{P}^{d}$ be the set of polynomials of degree $d$, $\mathcal{P}_+^{d,T}$ be the set of polynomials of degree $d$ nonnegative on interval $T$, $\mathcal{P}_{++}^{d,T}$ be the set of polynomials of degree $d$ positive on interval $T$, and $\bm z$ be the result of evaluating a function $z(t)$ on discretization points $\bm \tau=\{\tau_i\}_{i=1}^m$, $\bm z = z(\bm \tau)$. Projecting $\bm z$ on rational functions nonnegative on $T$ is therefore equivalent to solving \newpage
\vspace*{-0.85cm}
\begin{equation}
    \min_{h \in \mathcal{P}_+^{d_1,T},\ g \in \mathcal{P}_{++}^{d_2,T}} \big\| \bm z - h(\bm \tau)\big/g(\bm \tau)\big\|^2_2.
    \label{eq:proj_ratio}
\end{equation}
\subsection{Existing approaches to approximate (nonnegative) rational functions} \label{sec:existing}
Solving problem (\ref{eq:proj_ratio}) is not trivial, even when neglecting the nonnegativity constraints. If many works exist in the unconstrained case, most of them consider the infinity norm in (\ref{eq:proj_ratio})~\cite{trefethen2021exponential}, 
and there are very few works imposing nonnegativity: to the best of our knowledge this problem is only addressed in \cite{roh2006discrete,siem2008discrete}, for the infinity norm. 

In the unconstrained case, many works are based on another representation of rational functions, namely the Barycentric representation which is as follows 
\begin{equation}
    f(t) = \sum_{i=1}^d \frac{\omega_i z_i}{t-\alpha_i} \Bigg/ \sum_{i=1}^d \frac{\omega_i}{t-\alpha_i}.
\end{equation}
The advantage of this representation is that the basis used, that is, the sets of $\{\alpha_i\}_{i=1}^d$, can be adapted as the algorithm proceeds to avoid numerical problems at nonsmooth points \cite{filip2018rational}, or Froissart doublets \cite{nakatsukasa2018aaa}. Moreover, when $t\rightarrow \alpha_i$, then $f(t) \rightarrow z_i$, which allows one to optimize only the $\omega_i$. The most common method using this representation is the AAA \cite{nakatsukasa2018aaa}. This method gradually increases the size of the basis by judiciously choosing the $\alpha_i$ points to be added.  It does not seek to optimise a particular norm, but is a good initialization for future optimisation \cite{costa2021aaa,filip2018rational,hokanson2018least,khristenko2021solving}. On the other hand, even if it is not presented as such, one can see Vector Fitting as using the same representation. In this method, the whole basis is chosen at once. Then one optimises iteratively, using at each iteration the poles of the denominator found at previous iteration as new basis \cite{gustavsen1999rational}.

In both methods, once the basis is chosen, the numerator $h$ and denominator $g$ of $f$ are found by optimizing $\|\bm z g(\bm \tau)-h(\bm\tau)\|$ rather than $\|\bm z-h(\bm \tau)/g(\bm \tau)\|$. These methods give good results, but are difficult to use in the context of nonnegative rational functions, because nonnegativity is difficult to express in Barycentric form. 

Many methods try to get rid of the denominator which is difficult to optimise. Thus, \cite{sanathanan1963ratio} but also \cite{loeb1959rational} and \cite{wittmeyer1962rational} have proposed to solve the problem iteratively, using a guess of the denominator, $g^{k-1}$, improved throughout iterations, by solving
\begin{equation}
\label{eq:ratio_num_den}
    (g^k,h^k) = \underset{g\in \mathcal{P}^{d_2},h\in \mathcal{P}^{d_1}}{\text{argmin}} \Big\|\frac{\bm z g(\bm \tau) - h(\bm \tau )}{g^{k-1}(\bm \tau)} \Big\|.
\end{equation}

In the same idea, a special case of the RKFIT algorithm from \cite{berljafa2017rkfit} focuses on finding a good denominator by solving the following problem iteratively:
\begin{equation}
    \label{eq:ratio_rkfit}
    \min_{g^{k}\in \mathcal{P}^{d_2}} \Big\|  \frac{\bm zg^k - h'(g^k; \bm z, g^{k-1})}{g^{k-1}}\Big\| ,
\end{equation}
where $h'(g^k; \bm z, g^{k-1}) = \underset{h}{\text{argmin}} \Big\|\frac{\bm z g^k - h}{g^{k-1}} \Big\|$. 
The problem in $h$ when $g^k$ is fixed has an analytic solution (the solution of a similar problem is presented in Appendix~\ref{sec:implementation}, in the explanation of \textbf{RKFIT+}). This reformulation allows for fewer parameters to be optimised at each iteration. 

When using the infinity norm in (\ref{eq:proj_ratio}), if $g( \tau_i)$ is positive for all $i$, the problem can be rewritten as:
\begin{equation}
    \label{eq:feasibility_prob}
        \min_{h\in \mathcal{P}^{d_1},g(\tau_i)>0,u} u \ \ \text{ s.t.} \ \  \left\{\begin{matrix} z_i g(\tau_i) -h(\tau_i) \leq u g(\tau_i) \\
        h(\tau_i) -z_i g(\tau_i)  \leq u g(\tau_i)
        \end{matrix}\right. .
\end{equation}

If we fix $u$, then the problem is a feasibility problem, and therefore it is possible to perform a bisection search on $u$ to find the solution. This is the method used in \cite{roh2006discrete,siem2008discrete} to solve the problem on nonnegative rational functions. The numerator and the denominator of the rational functions are modeled using Sum Of Squares (SOS), which makes problem (\ref{eq:feasibility_prob}) a SDP feasibility problem for $u$ fixed. 

Finally, using equation (\ref{eq:ratio}), it is possible to see problem (\ref{eq:proj_ratio}) as a nonlinear least squares problem and to solve it using standard  methods \cite{trefethen2021exponential}. 

\subsection{Proposed projection methods} \label{sec:ourProj} Let us present five approaches to solve the projection problem on nonnegative rational functions. Some details of implementation are omitted and presented in Appendix~\ref{sec:implementation} instead, to lighten the text. 

{\textbf{Least Squares:}} Using equation (\ref{eq:ratio}), the projection problem can be rewritten in an unconstrained way and solved using a standard nonlinear least squares solver, as in R-NLS or R-ANLS. This is the approach used in \cite{hautecoeur2021hierarchical}.

{\textbf{Alternating Least Squares:}}
The projection problem can also be divided in two blocks, and solved using a BCD approach. Finding the best possible numerator when the denominator $g$ is fixed is a convex problem on polynomials:
\begin{equation} \label{eq:rkfitNum}
         \underset{h\in \mathcal{P}^{d_1,T}_+}{\text{argmin}} \bigg\|\bm z - \frac{h(\bm\tau)}{g(\bm\tau)}\bigg\|^2. \end{equation}
This problem is described in more details in Appendix~\ref{sec:implementation}. 

When the numerator $h$ is fixed, finding the best denominator is  a challenge as the problem is not convex. Actually this problem is a special case of the projection on rational functions, when the degree of the  numerator is equal to 0. So it can also be solved using nonlinear least squares solvers via equation (\ref{eq:ratio}). As second problem has fewer variables than the original one, we can hope that it will be solved faster.

\begin{algorithm}[ht]
\caption{Alternating Least Squares}
\label{alg:proj_als}
 \textbf{Input:} $\bm z$: signal to approximate, $d_1,d_2$: degree of the numerator/denominator, $\bm\tau$: discretization points, $g$: initial guess of the denominator, tol: tolerance of the algorithm
\begin{algorithmic}[1]
	\Function{Alternating LS}{$\bm z$, $d_1$, $d_2$, $\bm \tau$, $g$, tol}
	\While{$\frac{\text{err}_\text{prev} - \text{err}}{\text{err}} > $tol}
		\State $h = \text{argmin} (\ref{eq:rkfitNum})$
		\State $g = \underset{g\in \mathcal{P}^{d_2,T}_{++}}{\text{argmin}} \|\bm z - {h(\bm\tau)}/{g(\bm\tau)}\|^2.$
		\State $f(\bm \tau) = {h(\bm\tau)}/{g(\bm\tau)}$
		\State err$_\text{prev}$ = err, err = $\| \bm z - f(\bm \tau)\| ^2 $
	\EndWhile
	\State \textbf{return } $f(\bm\tau)$
	\EndFunction
\end{algorithmic}
\end{algorithm}
\vspace*{-0.1cm}

{\textbf{Conic:} }This method is inspired by equation (\ref{eq:ratio_num_den}). From a given estimate of the denominator $\tilde g$, we aim to recover the rational function by optimizing a problem without variables at the denominator. The problem we aim to solve is not the same as in (\ref{eq:ratio_num_den}), and is motivated in Appendix~\ref{sec:implementation}. Indeed, we aim to approximate  $\bm z$ by $f(\bm \tau) = \frac{h(\bm\tau)}{\pB(\bm\tau) + \delta(\bm\tau)}$, with $\pB \in \mathcal{P}_{++}^{d_2,T}$ fixed, by solving 
    \begin{equation}
        \underset{h \in \mathcal{P}^{d_1,T}_+,\delta\in \mathcal{P}^{d_2,T}_+}{\text{argmin}} \bigg\| \frac{\bm z \pB(\bm\tau) +\bm z \delta(\bm\tau) - h(\bm\tau)}{\pB(\bm\tau)}\bigg\|^2.\label{eq:conic}
\end{equation}
    
    Note that the parametrization $f(\bm \tau) = \frac{h(\bm\tau)}{\pB(\bm\tau) + \delta(\bm\tau)}$ allows representing any rational function nonnegative on a fixed interval, and that the  cost function of problem (\ref{eq:conic}) is an upper bound of the  cost function of problem (\ref{eq:proj_ratio}). Moreover, if $\bm z$ is a nonnegative rational function of appropriate degrees, for any $\pB$  it is possible to find $h$ and $d$ such that cost function (\ref{eq:conic}) is equal to zero and $\bm z = f(\bm \tau)$.
    
    The choice of $\pB$ is crucial for this algorithm: the smaller is $\delta$, and therefore the closer is $\pB$ from the denominator of the rational function, the closer are (\ref{eq:conic}) and (\ref{eq:proj_ratio}). Thus problem (\ref{eq:conic}) is solved iteratively, updating $\pB$ as $\pB + \delta$. Note that to avoid to increase $\pB$ indefinitely, it is normalized so that $\pB(\tau_m)=1$ before a new iteration, without loss of generality. This method is sketched in Algorithm~\ref{alg:rkfit}.

     \begin{algorithm}[ht]
\caption{Conic}
\label{alg:rkfit}
 \textbf{Input:} $\bm z$: signal to approximate, $d_1,d_2$: degree of the numerator/denominator, $\bm\tau$: discretization points, $\pB$: initial guess of the denominator, tol: tolerance of the algorithm 
\begin{algorithmic}[1]
	\Function{Conic}{$\bm z$, $d_1$, $d_2$, $\bm \tau$, $\pB$, tol}
	\While{$nb>$tol and $\frac{\text{err}_{\text{prev}} - \text{err}}{\text{err}} > $tol}
		\State $h,\delta = \text{argmin} (\ref{eq:conic})$
		\State $g = \pB + \delta$
		\State $f(\bm \tau) = \frac{h(\bm\tau)}{g(\bm\tau)}$
		\State $nb = \|\pB - g/g(\tau_m)\|^2; \ \pB = g / g(\tau_m)$
		\State $\text{err}_{\text{prev}} = \text{err}; \ \text{err} = \|\bm z - f(\bm \tau)\|^2$
	\EndWhile
	\State \textbf{return } $f(\bm\tau)$
	\EndFunction
\end{algorithmic}
\end{algorithm}

{\textbf{RKFIT+}: }This approach is inspired from the RKFIT method presented in \cite{berljafa2017rkfit}. 
    To find a good denominator, we consider (\ref{eq:conic}) and replace $h(\bm\tau)$ by its best value when $\delta$ and $\pB$ are considered as fixed, without taking into account the nonnegativity constraint. This means that we consider $h'(\pB,\delta,\bm z,\bm\tau) = \text{argmin}_{h\in \mathcal{P}^{d_1}} \big\| \bm z + \frac{\bm z \delta(\bm\tau) - h(\bm\tau)}{\pB(\bm \tau)}\big\|$ instead of $h$. As the nonnegativity constraint is omitted, this problem can be solved analytically using matrix operations (see Appendix~\ref{sec:implementation}). 
    This leads us to the following problem:
    \vspace*{-0cm}
    \begin{equation} \label{eq:rkfit} \underset{\delta \in\mathcal{P}^{d_2,T}_+}{\text{argmin}} \bigg\|
        \bm z + \frac{\bm z \delta(\bm{\tau})}{\pB(\bm{\tau})} -  \frac{h'(\pB,\delta,\bm z,\bm\tau)}{\pB(\bm\tau)} \bigg\|^2 . 
    \end{equation}
    
    To find a good projection on the set of nonnegative rational functions we iterate over instances of problem (\ref{eq:rkfit}). An iterative scheme is useful because problem (\ref{eq:rkfit}) relies on the fixed parameter $\pB$. The pseudo-code of RKFIT+ is quite similar to the one of Conic (Algorithm~\ref{alg:rkfit}). Line 7 is deleted, and lines 3 and 4 are replaced by $g = \text{argmin} (\ref{eq:rkfit})+\pB. $
    Moreover, problem (\ref{eq:rkfitNum}) is solved after the while loop to recover the numerator.

{\textbf{LinProj: }}This approach has been inspired from \cite{roh2006discrete,siem2008discrete}. In this case we consider the infinity norm instead of the squared norm, to express the problem as a bisection search over feasibility problems on polynomials as in (\ref{eq:feasibility_prob}). 
    These feasibility problems can even have linear constraints if we impose $h$ and $g$ to be nonnegative on points $\tau_i \in \bm \tau$ instead of being nonnegative on interval $T$ (this is different from what is done in \cite{roh2006discrete,siem2008discrete}). 
    The feasibility problem is then: 
        \begin{equation} \label{eq:linproj} \min_{h(\tau_i)\geq 0,g(\tau_i)\geq 1} 0 \ \ \  \text{s. t.} \ \ \left\{\begin{matrix}\bm z_ig(\tau_i) - h(\tau_i) \leq ug(\tau_i) \\  h(\tau_i) - \bm z_ig(\tau_i) \leq ug(\tau_i) \end{matrix}\right. \  \forall i , \end{equation}
     and a bisection algorithm is sketched in Algorithm~\ref{alg:lin}. Note  that $ g  $ is prevented from containing values smaller than $1 \ $ at  points $\tau_i$ without loss of generality, to simplify the feasibility problem, preventing [$-ug(\tau_i),ug(\tau_i)$] from being too small.  

    
    \begin{algorithm}[ht]
\caption{LinProj}
\label{alg:lin}
 \textbf{Input:} $\bm z$: signal to approximate, $d_1,d_2$: degree of the numerator/denominator, $\bm\tau$: discretization points, tol: tolerance of the algorithm 
\begin{algorithmic}
	\Function{LinProj}{$\bm z$, $d_1$, $d_2$, $\bm \tau$, tol}
	\State $u_{\max} = \max_{i}  \big\{ \bm z_i - \sum_{s=1}^m \bm z_s/m \big\}; \quad u_{\min} = 0$
	\While{$u_{\max}-u_{\min}$ > tol }
	    \State $u_\text{med} = (u_{\max} + u_{\min})/2$
	    \If{ problem (\ref{eq:linproj}) on $u_\text{med}$ is feasible}
	        \State $u_{\max} = u_\text{med} $
	    \Else
	        \State $u_{\min} =u_\text{med} $
	   \EndIf
	\EndWhile
	\State Find $h,g$ a feasible solution of (\ref{eq:linproj}) using $u_{\max}$
	\State \textbf{return} $\frac{h(\bm\tau)}{g(\bm\tau)}$
	\EndFunction
\end{algorithmic}
\end{algorithm}

\subsection{Comparison of the projection methods}

We now compare those five projection approaches. Algorithms have a tolerance \texttt{tol} of $10^{-8}$.  We consider two sets of inputs:
\begin{itemize}
    \item The signals to project are the discretization of nonnegative rational functions, whose numerator and denominator degrees are $d_1$ and $d_2$, respectively. An exact recovery is thus possible (exact).
    \item The signals to project are the same as in previous case except that we add a Gaussian noise with noise level 20dB (noisy).
\end{itemize} 
Unless specified otherwise, the rational functions have degree $(16,16)$, with $250$ discretization points equally spaced on $[-1,1]$. Fig.~\ref{fig:compProjR} displays the results. The quality of the final projection is computed as the squared norm of the difference between the signal to project and the computed projection, divided by the squared norm of the signal to project. The first observation from this figure is that no method outperforms all others. Indeed, even though RKFIT+ seems quite appropriate for "exact" data, as it obtains the lowest relative error and is among the fastest, it is quite inaccurate for noisy data. On the contrary, Least Squares and Alternating Least Squares provide the best projections on noisy data, but they obtain high errors when there is no noise. When comparing these two approaches, the Least Squares appears to be the best as it is significantly faster and obtain more accurate results. Therefore, we do not consider Alternating Least Squares in what follows. The Linproj generally obtains low relative errors, but sometimes it is unable to find a good candidate when there is noise.  Finally, the Conic approach is not very accurate compared to the others, but it is the fastest.
\hspace*{-0.5cm}
\begin{figure}[h!]
\vspace*{-0.3cm}
\begin{tikzpicture}
\node at (-3.5,-0.1) {Increasing degree};
\node at (-3.3,-6.75) {Increasing number of discretization points};
\node at (-3.5,-0.5) {\small \textbf{Exact data:}};
\node at (-3.5,-7.15) {\small \textbf{Exact data:}};
\node at (-5.2,-2.12){\includegraphics[width = 0.22\textwidth,trim = 3.2cm 0cm 16cm 1.8cm,clip]{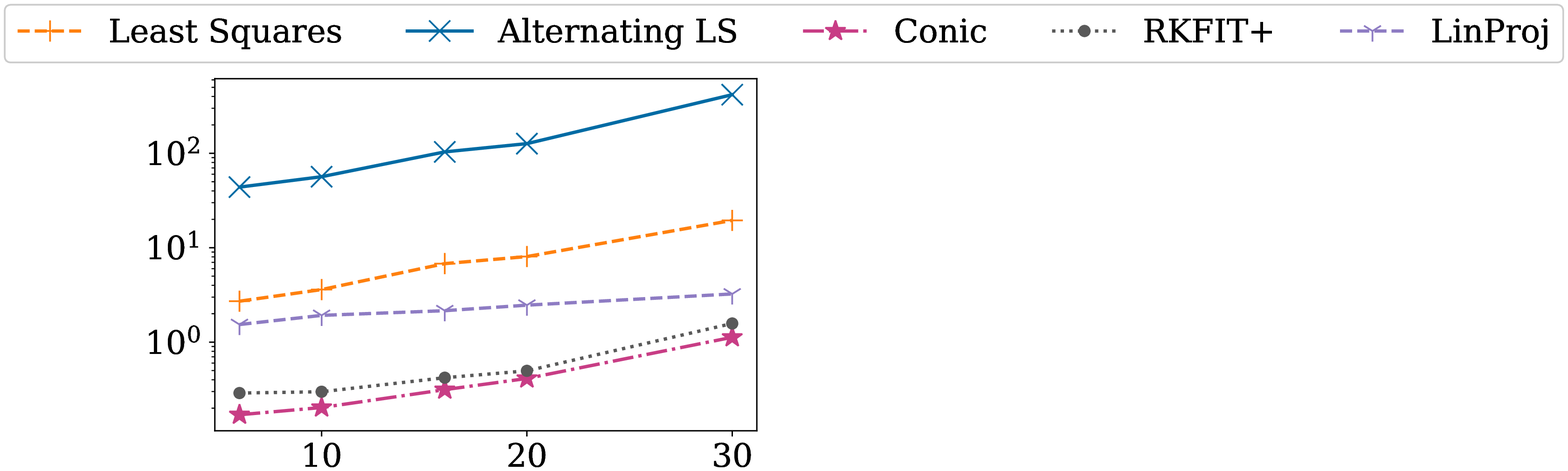}};
\node at (-1.55,-2.1){\includegraphics[width = 0.22\textwidth,trim = 0.3cm 0cm 8cm 0cm,clip]{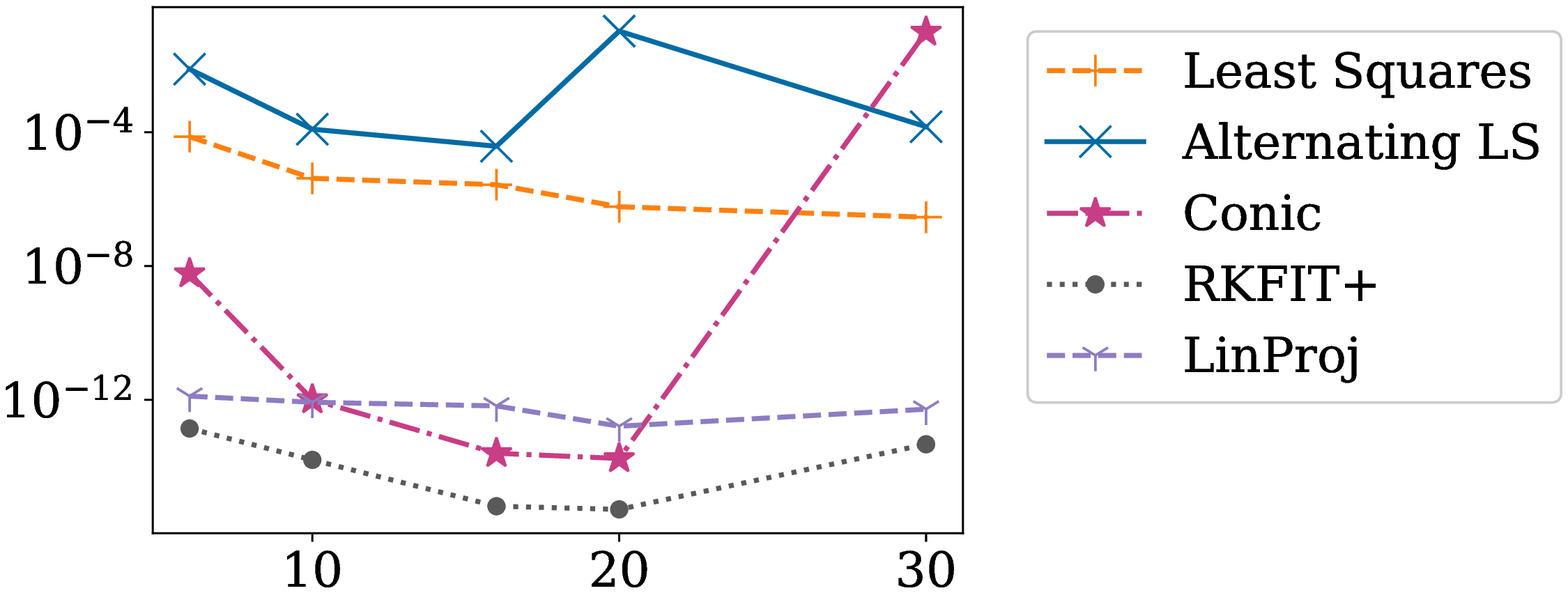}};
\node at (-5.2,-8.79){\includegraphics[width = 0.22\textwidth,trim = 3.2cm 0cm 16cm 1.5cm,clip]{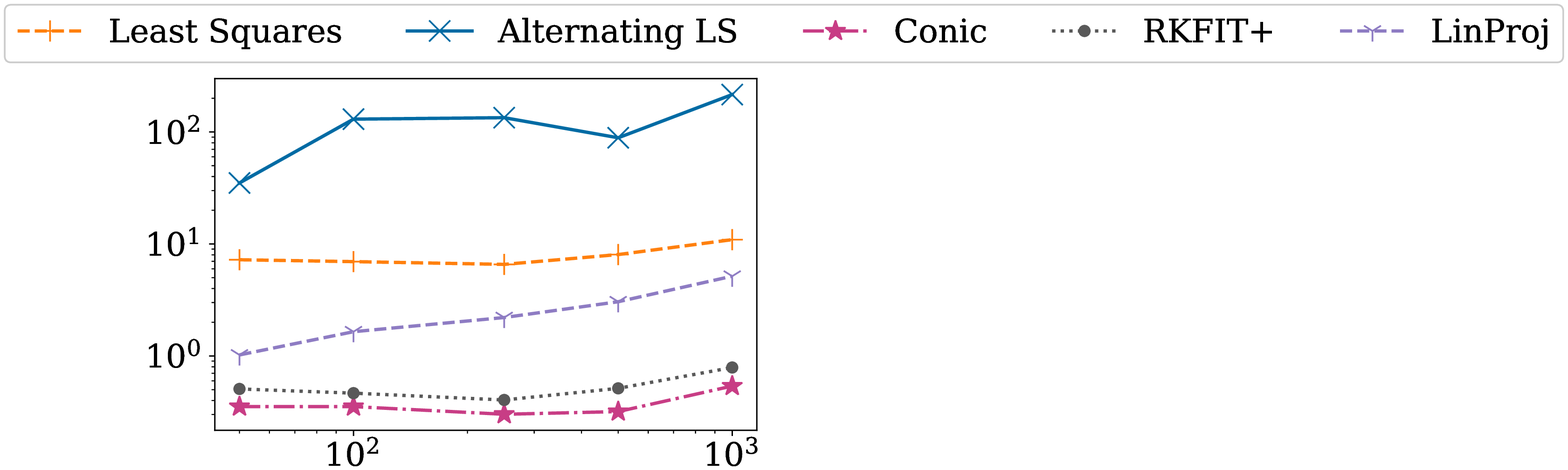}};
\node at (-1.55,-8.8){\includegraphics[width = 0.22\textwidth,trim = 0.4cm 0cm 8.2cm 0cm,clip]{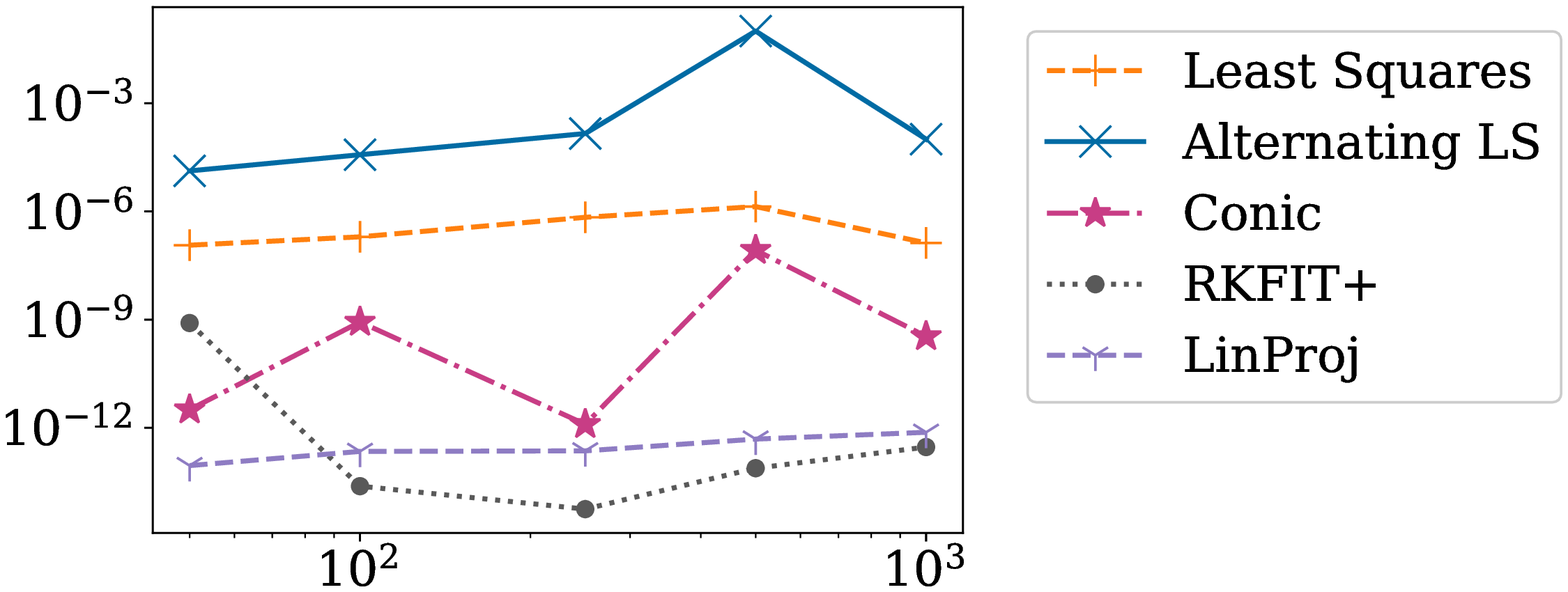}};
\node at (-5.2,-0.8) {\small Times(sec)};
\node at (-5.6,-3.15){\scriptsize degree};
\node at (-1.5,-0.8) {\small Relative error};
\node at (-1.8,-3.18){\scriptsize degree};
\node at (-5.2,-7.45) {\small Times(sec)};
\node at (-4.9,-9.85){\scriptsize discr. points};
\node at (-1.5,-7.45) {\small Relative error};
\node at (-1.2,-9.85){\scriptsize discr. point};

\node at (-3.5,-3.5) {\small \textbf{Noisy data:}};
\node at (-3.5,-10.2) {\small \textbf{Noisy data:}};
\node at (-5.2,-5.2){\includegraphics[width = 0.22\textwidth,trim = 3cm 0cm 16cm 1.8cm,clip]{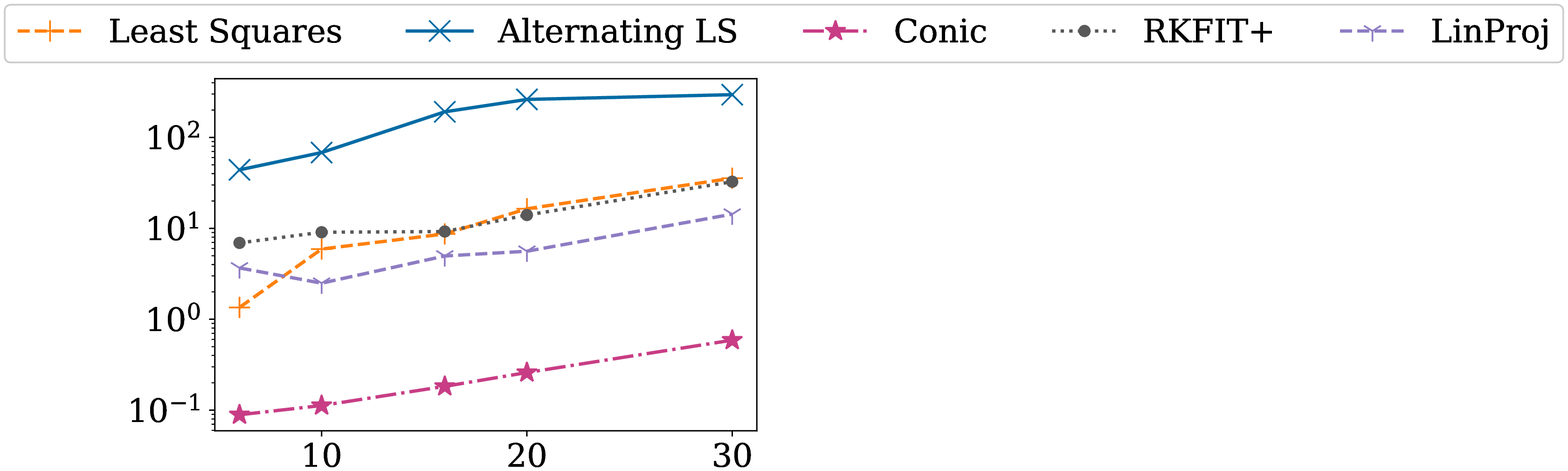}};
\node at (-1.55,-5.18){\includegraphics[width = 0.21\textwidth,trim = 0.3cm 0cm 8cm 0cm,clip]{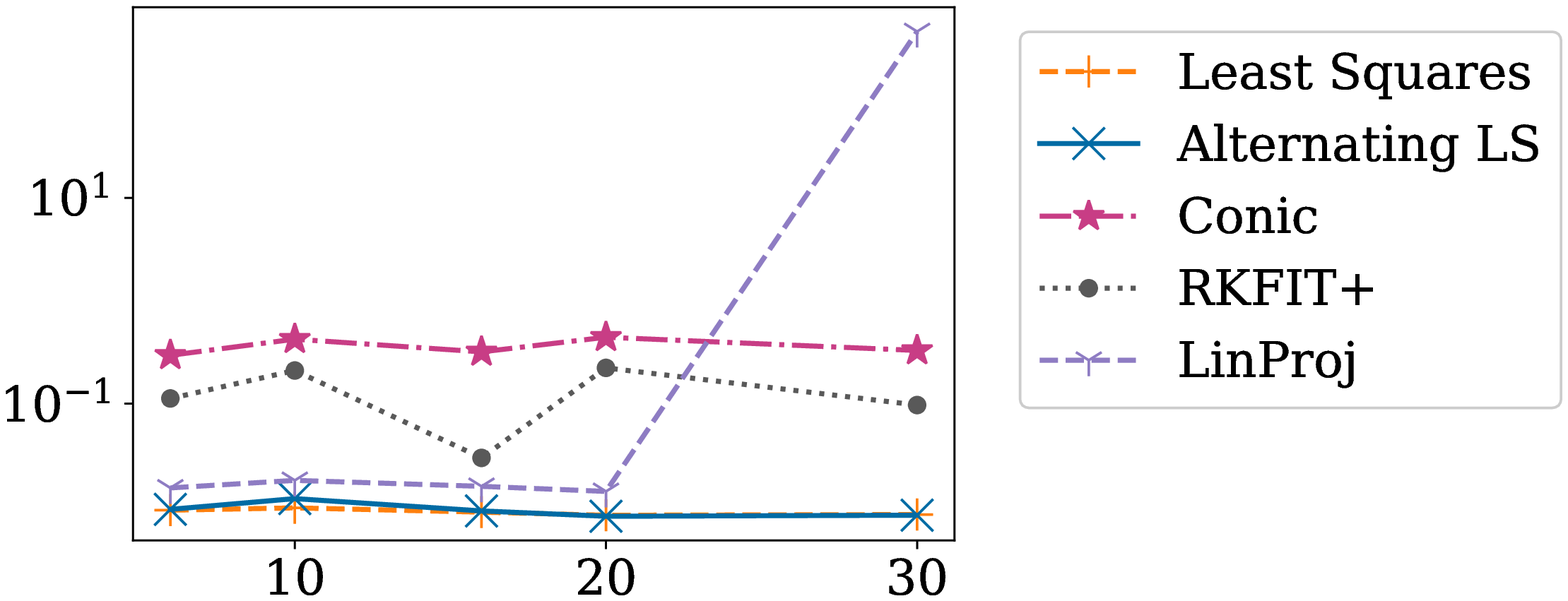}};
\node at (-5.2,-11.7){\includegraphics[width = 0.22\textwidth,trim = 3.2cm 0cm 16cm 1.5cm,clip]{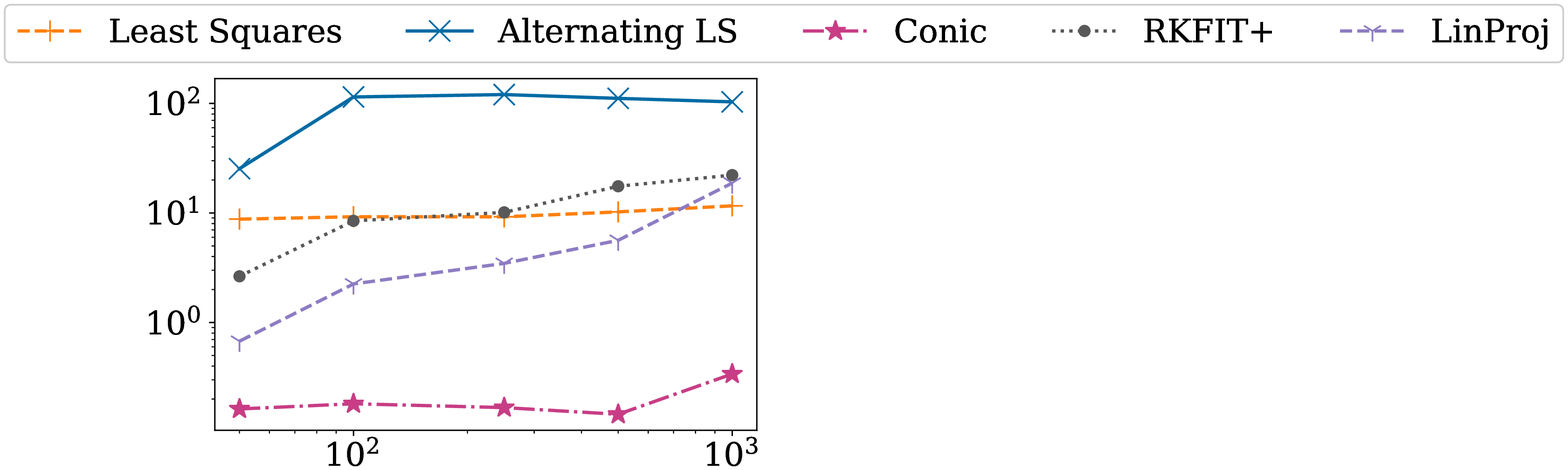}};
\node at (-1.55,-11.75){\includegraphics[width = 0.21\textwidth,trim = 0.3cm 0cm 8.2cm 0cm,clip]{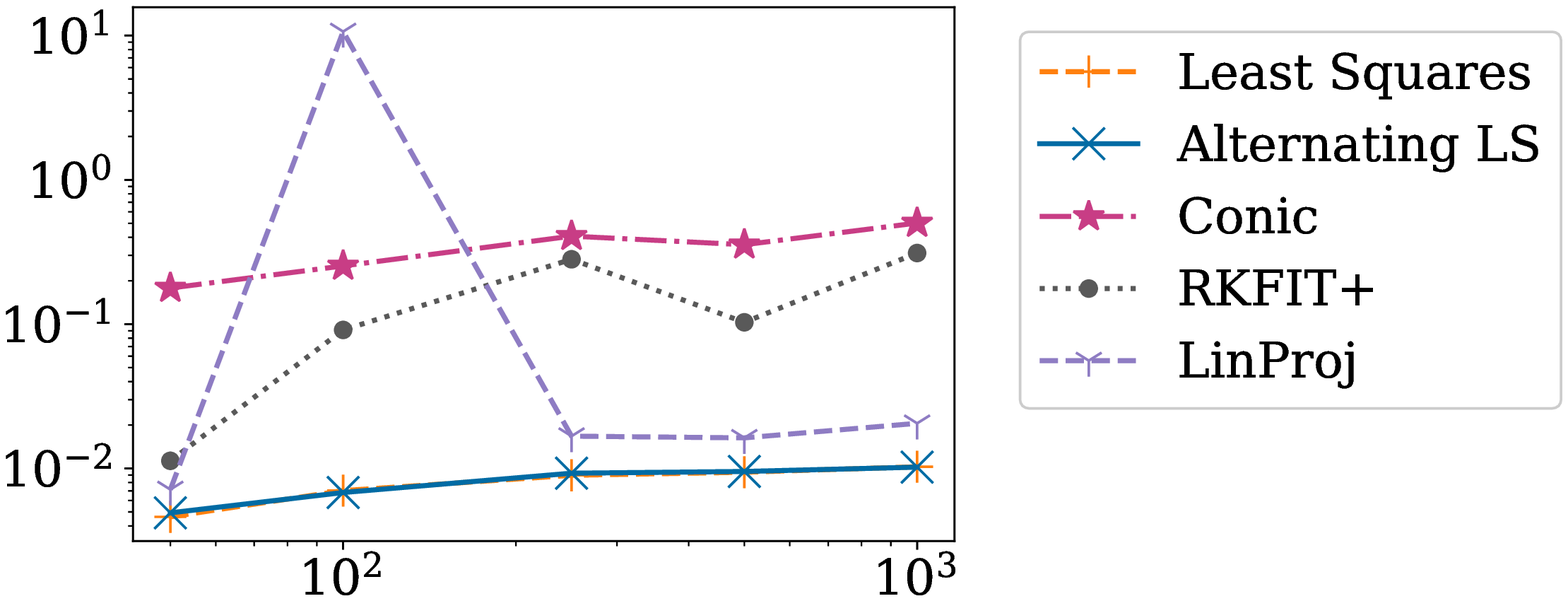}};
\node at (-5.2,-3.85) {\small Times(sec)};
\node at (-5.54,-6.25){\scriptsize degree};
\node at (-1.7,-3.85) {\small Relative error};
\node at (-1.81,-6.25){\scriptsize degree};
\node at (-5.2,-10.42) {\small Times(sec)};
\node at (-4.9,-12.82){\scriptsize discr. points};
\node at (-1.7,-10.45) {\small Relative error};
\node at (-1.2,-12.82){\scriptsize discr. points};

\node at (-3.5,-13.2){\includegraphics[width = 0.48\textwidth, trim = 0cm 9cm 0cm 0cm,clip]{figures/PerfsProjR_Nnoisy.eps}};
\end{tikzpicture}
    
    \caption{Comparison of the projections. The results are averaged over 10 trials. The plots represents the time needed for computations (left) and the relative error (right). 
    }
    \label{fig:compProjR}
\vspace*{-0.3cm}
\end{figure}

We conclude from these experiments that the Least Squares and the RKFIT+ methods seem the more promising projection methods, but they are not always better than the others, and do not outperform them significantly.

\vspace*{-0.1cm}
\section{Performance and Comparison of R-NMF algorithms}
\label{sec:compare}
In this section, we first briefly discuss the computational complexity of the proposed algorithms. 
Then, we compare the R-NMF algorithms
presented in Section~\ref{sec:methods} on purely synthetic datasets to analyze their reconstruction ability and their efficiency. After that, the most promising methods are compared to standard HALS and HALS using polynomials or splines \cite{hautecoeurNeuro} on semi-synthetic datasets. We chose to use  HALS because it is fast and obtains comparable results in terms of accuracy as other  approaches \cite{hautecoeurNeuro}. The methods are also compared on a classification task on a real dataset: the Indian 
Pines dataset\footnote{\url{http://www.ehu.eus/ccwintco/index.php?title=Hyperspectral_Remote_Sensing_Scenes#Indian_Pines}}. 

The least squares solver used for the experimentation is the function \texttt{least\_squares} from python
, with default parameters. The least squares problems are therefore solved using a trust region reflective algorithm \cite{branch1999subspace}. 

\subsection{Algorithmic complexity of the methods}
\label{sec:complex}
Let the following reasonable assumption apply: $r<d<n,m$, where $d$ is the number of degrees of freedom of the used function, e.g.,  $d_1+d_2+1$ for rational functions, the degree plus one for polynomials, and the number of interior knots plus two for splines. The number $r$ is the rank of factorization, $n$ is the number of observations, 
and $m$ is the number of discretization points. Let the complexity of the least squares solver be $ls(k)$ where $k$ is the size of the Jacobian, and $p(k)$ be the complexity of the projections for polynomials and splines, where $k$ is the number of variables to optimize by the algorithm.

We know that an update of HALS for $\X$ has complexity $\mathcal{O}(rmnI)$, where $I$ is the number of iterations. The complexities of HALS using polynomials or splines from \cite{hautecoeurNeuro}, and of R-HANLS using least-squares projection, R-ANLS and R-NLS can also be computed. Their value is summarized in Table~\ref{tab:complex}.
Among HALS methods, R-HANLS is the slowest. Indeed, rational functions are not linearly parametrizable and $m$ appears in the complexity, unlike for polynomials or spline, where $m$ is replaced by $d$ which is significantly lower. 
Nevertheless, R-HANLS is much faster than R-ANLS or R-NLS for large datasets.  
\begin{table}[h]
    \centering
    \begin{tabular}{|c|c|c|}
    \hline
         \small HALS & \small Poly/splines & \small  R-HANLS \\
        \hline
         \small $\mathcal{O}(rmnI)$& \small $\mathcal{O}(rdnI + rp(d^2)I)$ & \small $\mathcal{O}(rmnI + r \ ls(md)I)$  \\
         \hline
    \end{tabular}
    \vspace*{0.3cm}
    
    \begin{tabular}{|c|c|}
    \hline
         \small R-ALS & \small R-LS  \\
        \hline
        \small $\mathcal{O}(rmnI + ls(rdmn)I)$ & \small $\mathcal{O}(ls(rmn(n+d)))$ \\
         \hline
    \end{tabular}
    \vspace*{0.3cm}
    \caption{Computational complexity of the various NMF methods.}
    \label{tab:complex}
\end{table}
\vspace*{-0.5cm}

\subsection{Datasets} 
We use synthetic datasets generated as follows. 
We generate matrix $X \in \mathbb{R}_+^{n\times r}$ randomly, following a Dirichet distribution whose parameters are equal to $\alpha = 1/r$. 
The data provided to the algorithms is $Y = A\X^\top + N$ where $N$ is additive Gaussian noise with known Signal to Noise Ratio (SNR). The matrix $A$ is generated in two ways: 
\begin{itemize}
    \item a "purely synthetic" $A$ which is the discretization of $r$ nonnegative rational functions. The functions are generated as follows. We first create a nonnegative polynomial of degree $d_1$ that is perturbed using a rational function of degree $(1,2)$. This creates a smooth signal with some peaks. The signal is then projected on the set of nonnegative rational functions of degree $(d_1,d_2)$. In this situation, it is therefore possible to find the exact solution of the problem. 
    
    \item a "semi-synthetic"   $A$ whose columns are the real reflectance signals of Adulania, Clinochlore, Hypersthene, Olivine, Spessatine, Andesine, Celestine and Kaolinite evaluated on 414 nonequally spaced points. 
    These signals are showed in Fig.~\ref{fig:refl_sig} (left) and come from the U.S. Geological Survey (USGS) database \cite{USGSdata}. Those signals are not particularly close to rational functions, but they are generally smooth even though they present some peaks. If $r$ is smaller than $8$, we only consider the first $r$ signals in the list.  
\end{itemize}
In all our experiments we impose  methods to have the same number of degrees of freedom (except standard HALS which operates over unstructured nonnegative vectors). This means that if we use rational functions with degree ($d_1$,$d_2$), we use polynomials of degree $d_1+d_2$, and splines of degree 3 with $d_1+d_2-1$ interior knots. Let $A^k$, $X^k$ denote the factors obtained at iteration $k$. 
Accuracy is evaluated trough the relative residue computed as 
\begin{equation}
\frac{\|AX^\top - A^k{X^k}^\top  \|}{\|AX^\top \|}.\label{eq:error}
\end{equation}
Note that this evaluation is performed on $AX^\top $, that is, the data before adding the noise, and therefore the quality is evaluated on data not provided to the algorithm. The stopping criterion of the algorithms is the following:
\begin{equation}sc^k = \frac{\|Y \hspace*{-0.05cm} - \hspace*{-0.05cm} A^{k-1}{X^{k-1}}^\top \| - \|Y \hspace*{-0.05cm}-\hspace*{-0.05cm} A^k{X^k}^\top   \| }{\|Y - A^k{X^k}^\top  \hspace*{-0.1cm} \|} < 10^{-12}\hspace*{-0.1cm}.\label{eq:stopC}\end{equation}
We also impose algorithms to have a maximum running time. Methods based on HALS are limited to 200 seconds, while R-ANLS and R-NLS are limited to 1000 seconds. 
These times have been inspired from Table~\ref{tab:complex}, and selected to be not too important, while allowing the algorithm to converge in most cases, as we will see in the experiments.  

We also report the quality of factorizations by computing the Signal to Interference Ratio (SIR) between the computed $A'$ and the original $A$. The larger the SIR, the closer $A'$ is to $A$. 
As the factors can be 
permuted without loss of generality, we first compute the best permutation 
of $A'$ before computing the SIR. 

In what follows, each test is performed 10 times, using different initializations. To summarize the performance, we compute the minimal and the maximal value obtained for each criterion, and put a marker at the mean value of the criterion. 
If the graph shows the evolution of two criteria with respect to a parameter (like $n$, $m$, $d$ or $r$), only the mean value is presented to improve readability.
We consider that an algorithm converged at iteration $k$ if $\frac{sc^k-sc^{o}}{sc^k}<10^{-3}$ for all $o\geq k$. 
This is used to evaluate the time needed by each algorithm, that it is the time needed to converge. 

\begin{figure}[h!]
\hspace*{-0.1cm}
\begin{tikzpicture}
\node at (-2.5,-1.7){\includegraphics[width = 0.24\textwidth, trim= 0cm 0cm 0cm 0cm, clip]{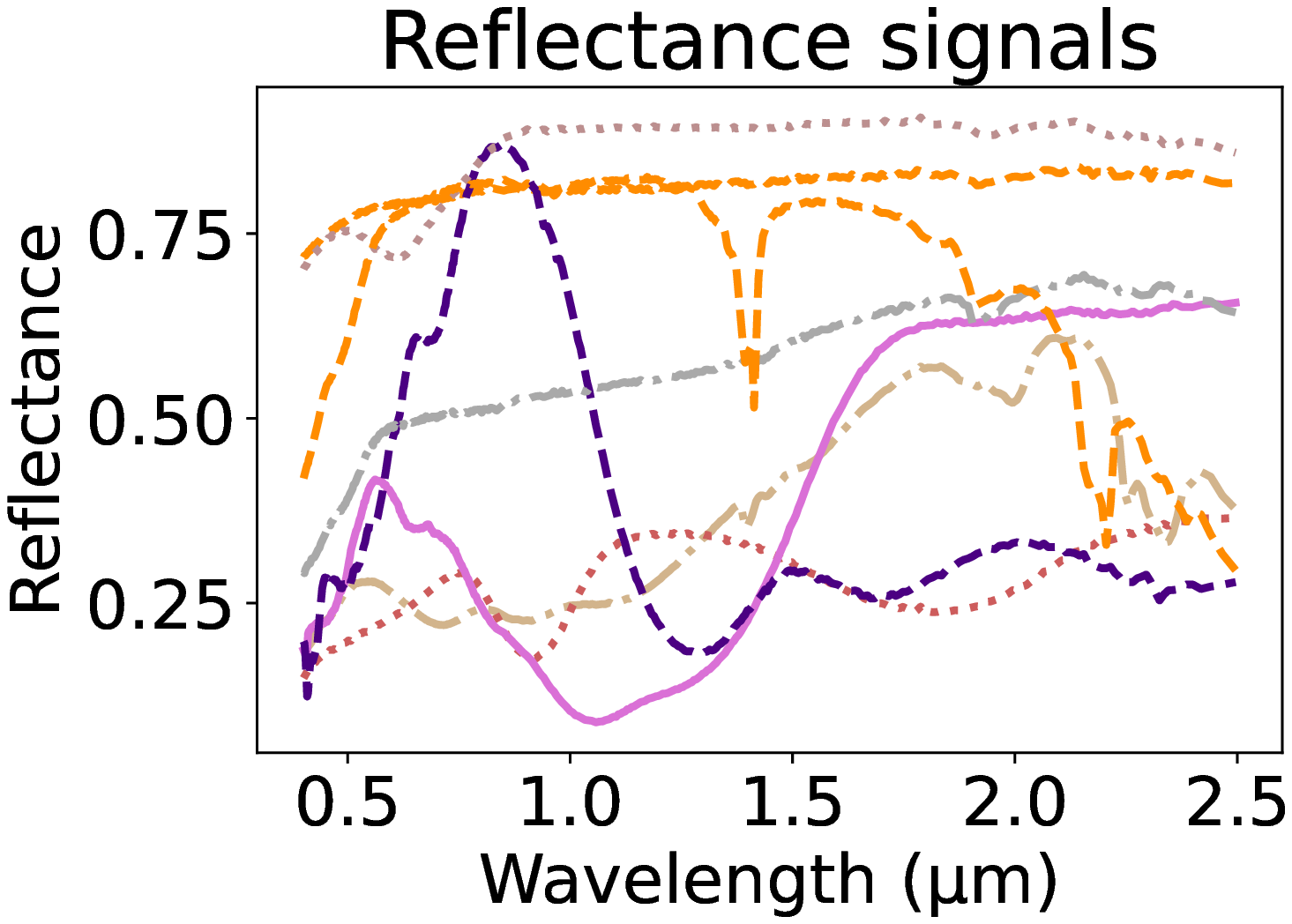}};
\node at (2,-1.7){\includegraphics[width = 0.24\textwidth, trim= 0cm 0cm 0cm 0cm, clip]{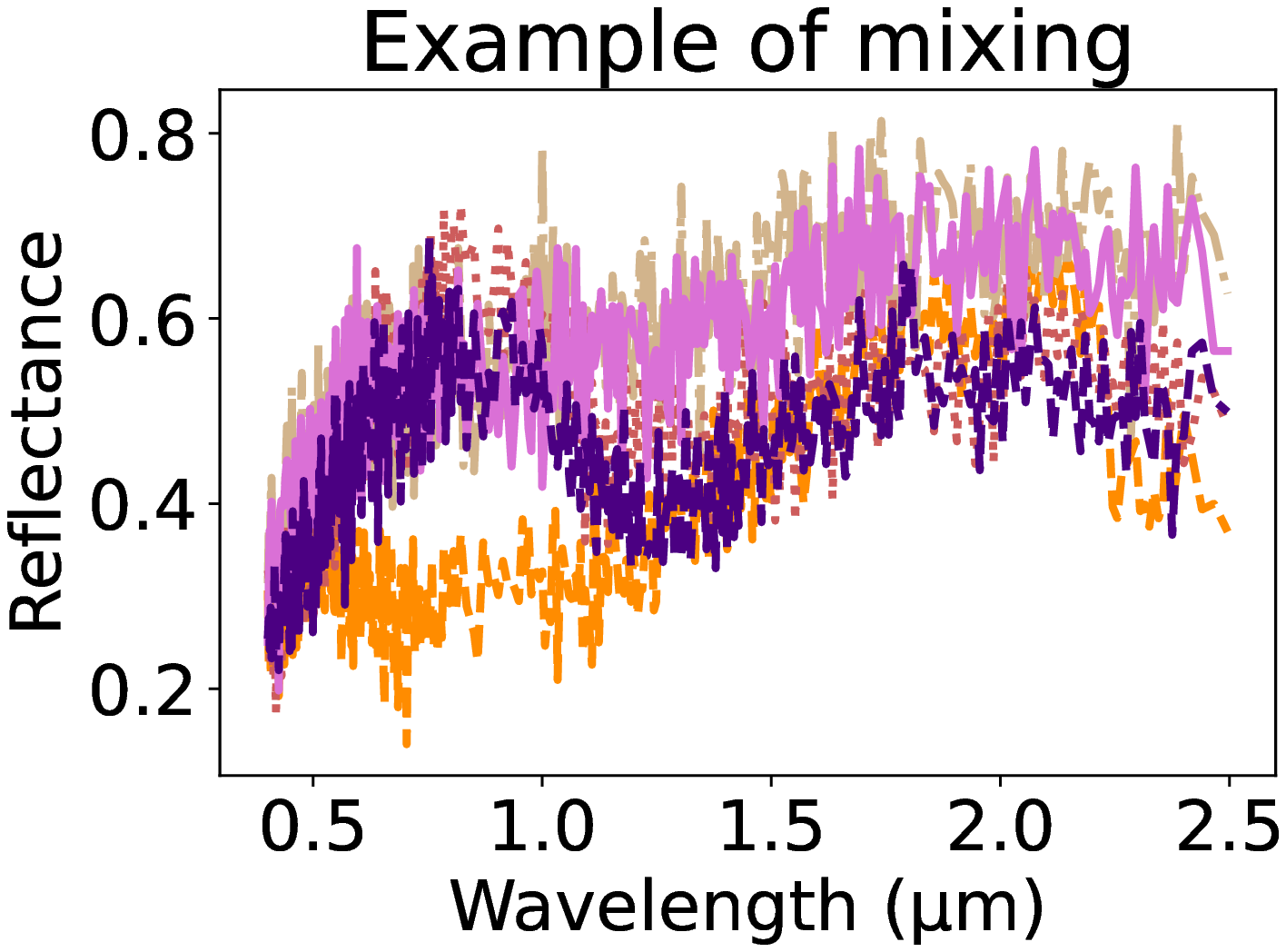}};
\end{tikzpicture}
\caption{Left: Considered real reflectance signals.
    Right: Example of mixing of those signals with noise level 20dB. Each of the five signals is a column of $Y$.}
    \label{fig:refl_sig}
\vspace*{-0.3cm}
\end{figure}

\subsection{Initialization of the projections in R-HANLS}

To get the best out of R-HANLS with the different projections, we use the fact that the last iterates of R-HANLS tend to become close to each other. Therefore, we exploited knowledge from previous iterations, as suggested in \cite{hautecoeur2021hierarchical}: 
\begin{itemize}
    \item Least Squares: use the previous projection as a starting point of the least squares solver.
    \item Conic and RKFIT+: use the previously obtained denominator as first guess. 
    \item LinProj: use a potentially better $u_{\max} = \max_{i} \{ |\bm z_i - f_\text{prev}(\tau_i) | \}$ . 
\end{itemize}

Moreover, the tolerance of the projection methods is decreased progressively from $10^{-2}$ to $10^{-8}$, and Conic and RKFIT+ are limited to one iteration. This leads to accurate results in a reasonable time. Nevertheless, we noted during experiments that using knowledge from previous iterations is particularly beneficial for Least Squares.

\subsection{Purely synthetic dataset}
Let us present the result with or without noise.

\textbf{Case without noise: } 
In this case there is no noise to filter, but it is still interesting to analyze the data and find the factors behind them. By the uniqueness property of rational functions presented in Section~\ref{sec:unique}, we can hope that the methods based on rational functions are able to recover the original signals. We observe in Fig.~\ref{fig:illuNoNoise}  that even though the SIR of methods using rational functions are on average better than the SIR recovered by HALS (which uses any nonnegative vector to represent each column of $A$), this is not always the case, and there is much more variability on the results when using rational functions than when using HALS.  Nevertheless, the best SIR obtained by methods using rational functions are much better than the best SIR obtained when using HALS (except for R-HANLS using LinProj projection). 

Moreover, HALS obtains the best residue, which is expected as it has much more degrees of freedom.  It is therefore difficult to beat HALS in terms of pure data approximation when data is noiseless. Among methods using rational functions, we can see that the LinProj projection is not appropriate; this method is therefore not presented in what follows. The other R-NMF methods have similar performances, except in terms of computation time. Nevertheless, it seems that R-ANLS is the most accurate method in terms of obtained residue, while R-HANLS-based methods are faster. 
\begin{figure}
    \centering
    \includegraphics[width=0.5 \textwidth]{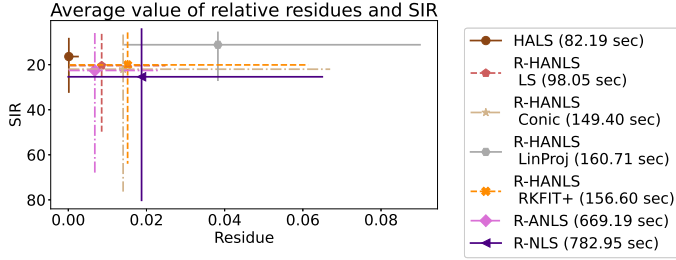}
    \vspace*{-0.5cm}
    \caption{Average performance, with $n=[20,100]$, $d=[6,10]$, $r=[5,10]$. Data is not noisy.}
    \label{fig:illuNoNoise}
\end{figure}

We observe in Fig.~\ref{fig:n_nonoise} that when the number of observations $n$ is small ($n=20$),  R-NLS is able to recover the original signals, as this method obtains a low residue and a high SIR. However, it is unable to do so when the number of observations  increases. We may wonder if this bad result is due to a too tight time constraint, which prevents the algorithm from converging, but even by running the algorithm for 1h (that is, three times longer), the performance did not improve significantly. R-ANLS is the most robust method among methods using rational functions when $n$ changes as its residue is not impacted by this change, unlike other R-NMF methods. 
\begin{figure}\vspace*{-0.3cm}    \centering
    \includegraphics[width=0.5 \textwidth]{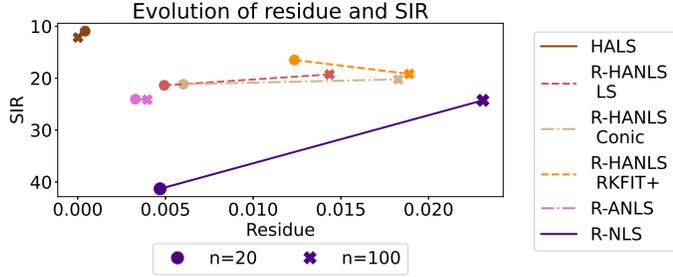}
    \caption{Performance for varying $n$. Data is not noisy. }
    \label{fig:n_nonoise}
    \vspace*{-0.3cm}
\end{figure}

\textbf{Case with noise: }
When noise is added to the dataset, NMF is also useful to filter noise in the data, which can be evaluated through the relative residue (\ref{eq:error}): a low relative residue means a good ability to filter the noise. Fig.~\ref{fig:noise} shows the average results for low and high noise levels. 
We observe that the performance of all algorithms deteriorates when the level of noise increases, as expected. 
Using the Conic or RKFIT+ projections in R-HANLS does not work well when the noise level is high. The noise level has a high impact on the residue of HALS, which means that this method is not good at filtering the noise on the data. However, the quality of the recovered factors is not much impacted by the noise level and stays around 35 dB. R-HANLS LS and R-ANLS obtains the best performances when the noise level is high both in terms of SIR and residue.  We see in Fig.~\ref{fig:n_noise} that increasing $n$,  the number of observations, has a very different impact depending on the used methods: it makes R-NLS perform worse, but it helps the other methods, especially HALS. 
\begin{figure}
    \centering
    \includegraphics[width=0.46 \textwidth]{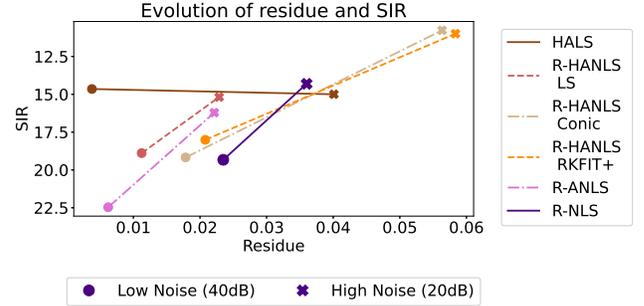}
    \caption{Average performance for varying level of noise.}
    \label{fig:noise}
\end{figure}
\begin{figure}
\vspace*{-0.2cm}
    \centering
    \includegraphics[width=0.46 \textwidth]{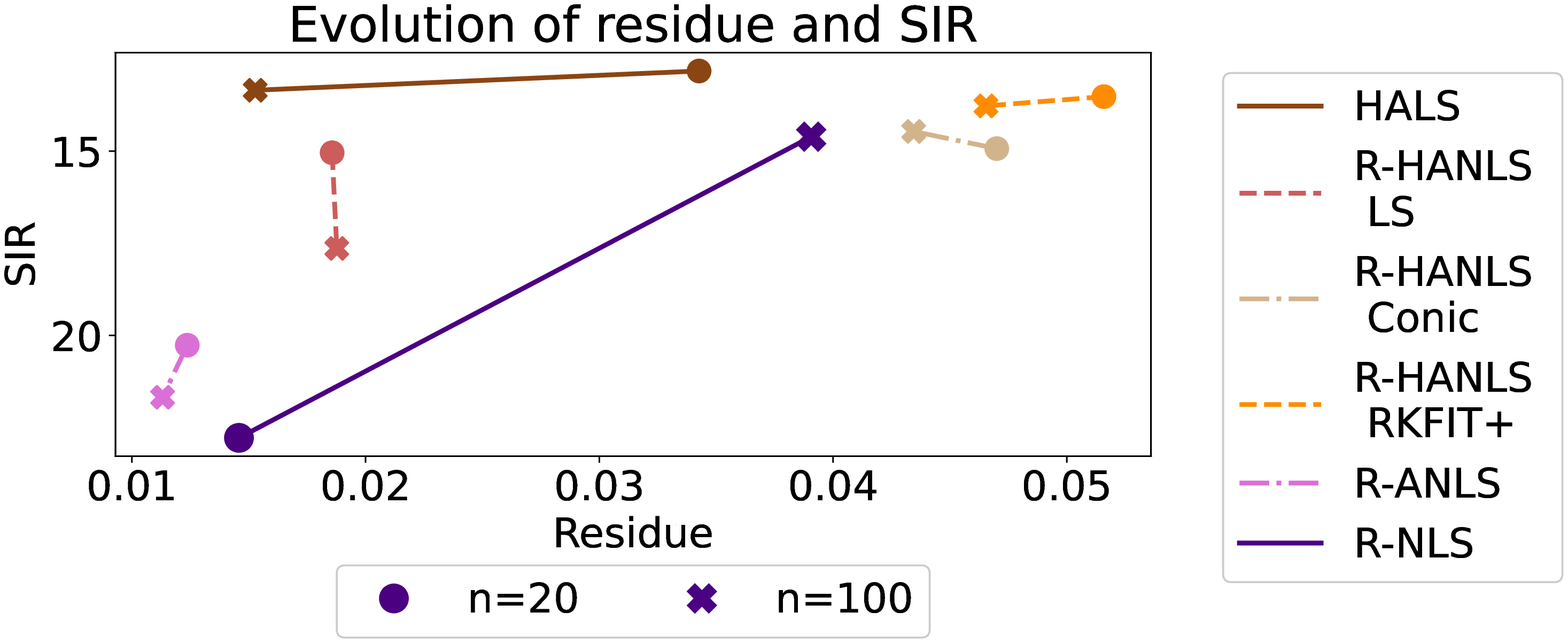}
    \caption{Average performance for varying $n$, data is noisy.}
    \label{fig:n_noise}
\vspace*{-0.3cm}
\end{figure}
\subsection{Semi-synthetic dataset}
We saw in previous sections that using rational functions in NMF when data is composed of rational functions can help significantly the algorithm, but is very sensitive to initialization. The use of rational functions is especially relevant for difficult problems, i.e. for high noise levels and when only a few observations are available. 

Let us analyze the performance of the algorithms in the semi-synthetic case, when the noise level is high (20dB) and the number of observations is low ($n=20$). This will allow us to validate whether using rational function is beneficial in such situations. 
We compared the methods to HALS as before, but also to HALS using polynomials or splines presented in~\cite{hautecoeurNeuro}. We also considered combining the R-ANLS and the R-HANLS LS methods, to try to obtain a method obtaining the same quality as R-ANLS with speed comparable to R-HANLS LS, and to have thus the best of the two algorithms. When combining these two approaches, we run one of them until the relative residue was below $10^{-2}$, and we use the result of this first method as initialization of the second method. 

Fig.~\ref{fig:refl} displays the results. 
We observe  that the R-NMF methods  obtain the smallest residues, and are thus  best to filter the noise. Among these methods, R-NLS obtains the best SIR, but it is also quite slow despite the small number of observations. R-ANLS and  the combination R-ANLS/R-HANLS obtain also good SIR  values. Note that the combination is able to obtain accuracy close to the one obtained by R-ANLS but much faster. 
The objective of combining methods is therefore met in this case. HALS using polynomials or splines also filters well the noise while HALS has more difficulties. However, all methods have difficulties to recover the original signals, as the SIR are low on average for all methods. 
Fig.~\ref{fig:refl2} shows that when a small number of signals are mixed, $r=3$,  some methods based on rational functions manage to recover a good approximation of the original signals, but when the number of original signal increases, for $r=5$ or $8$, the recovered signals do not really resemble the original ones, as illustrated in Fig.~\ref{fig:refl3}. We also observe in this figure that the signals recovered by HALS are very nonsmooth.  

On another hand, changing the degree does not influence the SIR. However, Fig.~\ref{fig:refld} shows that choosing a too low number of degrees of freedom $(d=12)$
penalizes the algorithms in terms of relative residue, especially when using polynomials or splines. 
The fact that rational functions already obtain good results for $d=12$ can be explained by the fact that rational functions are able to express a larger variety of shapes than polynomials or splines for the same degrees of freedom. However, this advantage turns into a drawback when the number of degrees of freedom is too high. Indeed,  the performances of the methods using rational functions are slightly degraded for larger degrees, because the algorithm starts to model the noise. This is the case in particular for R-HANLS LS and R-ANLS/R-HANLS. Nevertheless, the variability seems to be reduced in this case (the worst case is better than when using a lower number of degrees of freedom).  
\begin{figure}[h!]
\vspace*{-0.5cm}
    \centering
\begin{tikzpicture}
\node at (-0.05,0) { \includegraphics[width=0.49\textwidth, trim=0.2cm 0cm 0cm 0cm, clip]{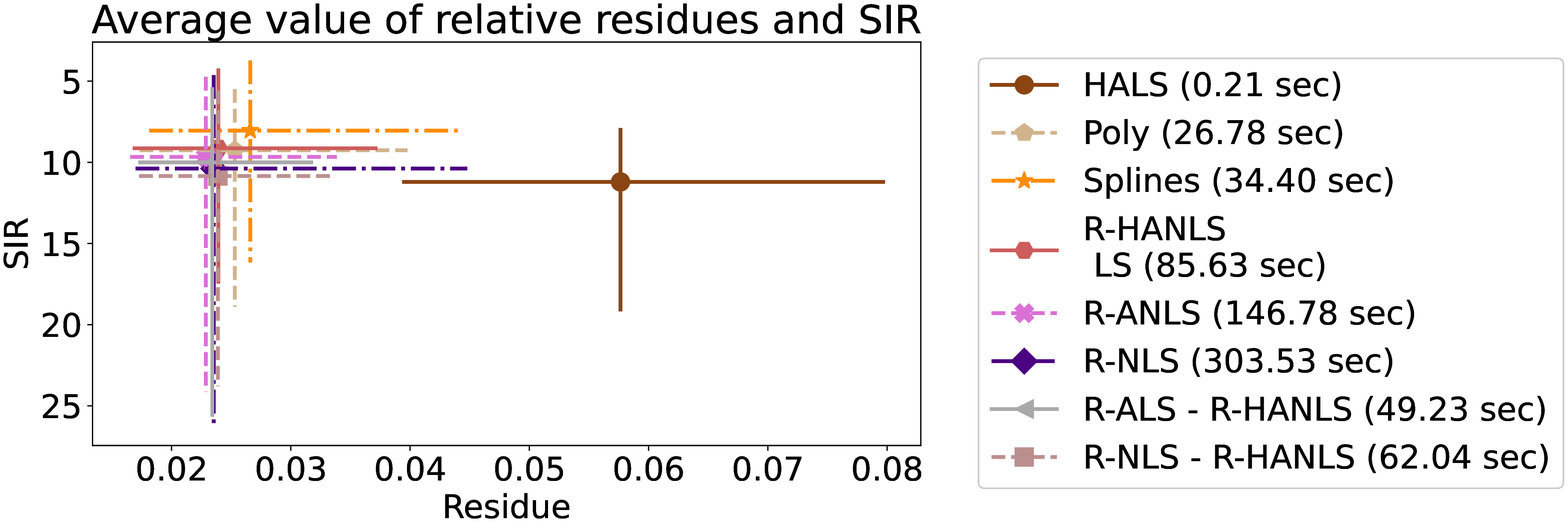}};
\draw [draw=red] (-3.4,0.75) rectangle(-3.1,0.35);

\node at (0.3,-0.3) {\includegraphics[width = 0.03\textwidth, trim= 2.5cm 3cm 32.5cm 3.5cm, clip]{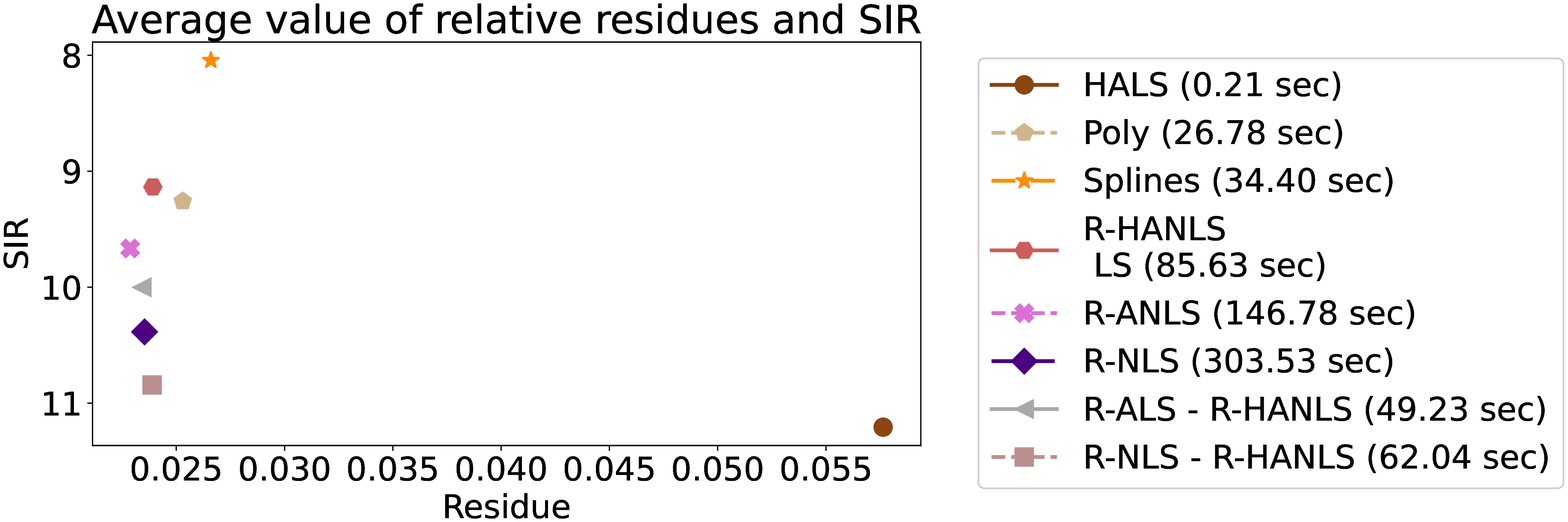}};
\draw [draw=red] (-3.08,0.55)--(0.08,-0.3);
\draw [draw=red] (0.08,0.35) rectangle(0.52,-0.95);
\end{tikzpicture}
\vspace*{-0.7cm}
    \caption{Performance on semi-synthetic dataset.}
    \label{fig:refl}
\end{figure}
\begin{figure}[h!]
    \centering
    \includegraphics[width=0.46\textwidth]{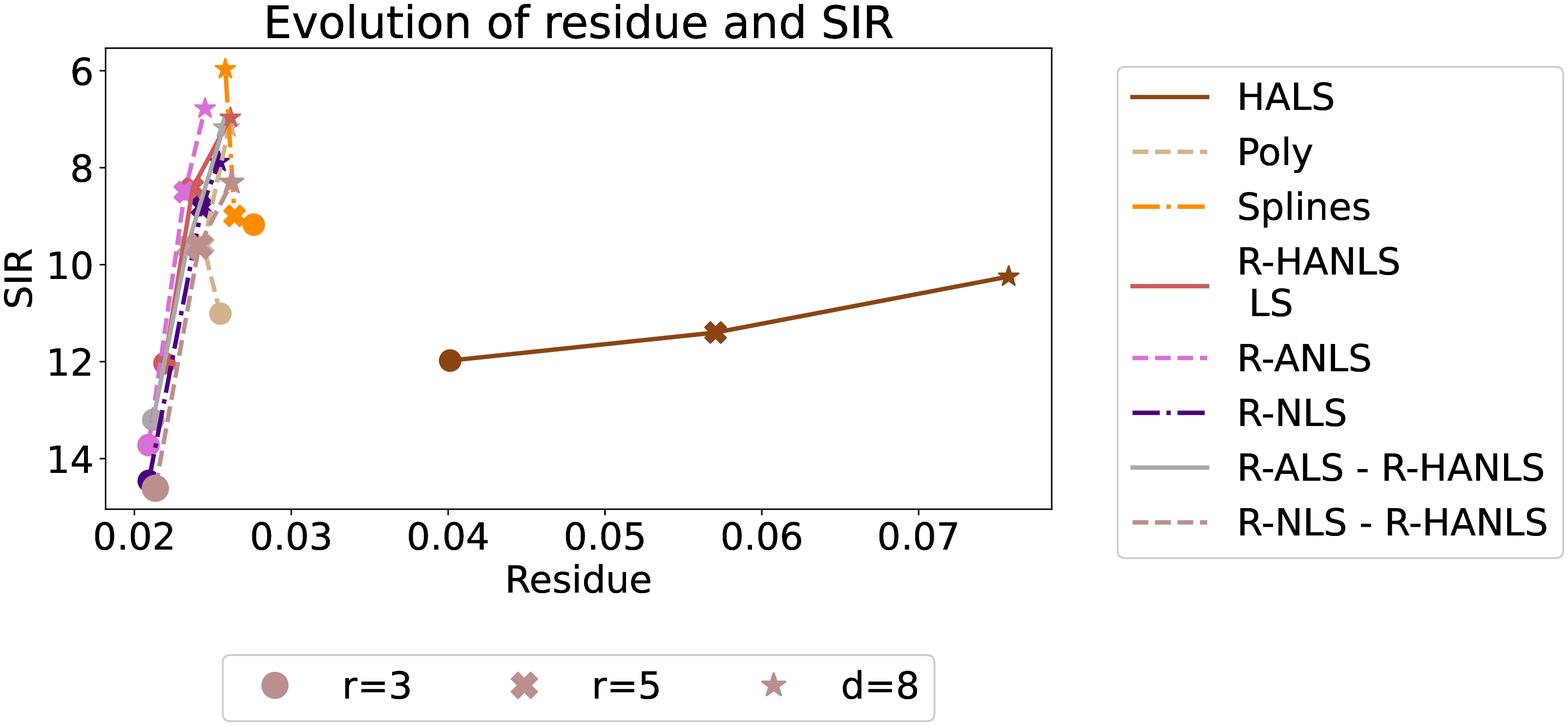}
    \vspace*{-0.2cm}
    \caption{Performance for varying rank.}
    \label{fig:refl2}
\end{figure}
\begin{figure}[h!]
\vspace*{-0.5cm}
    \centering
    \begin{tikzpicture}
    \node at (-2.2,1.6) {\includegraphics[width=0.245\textwidth]{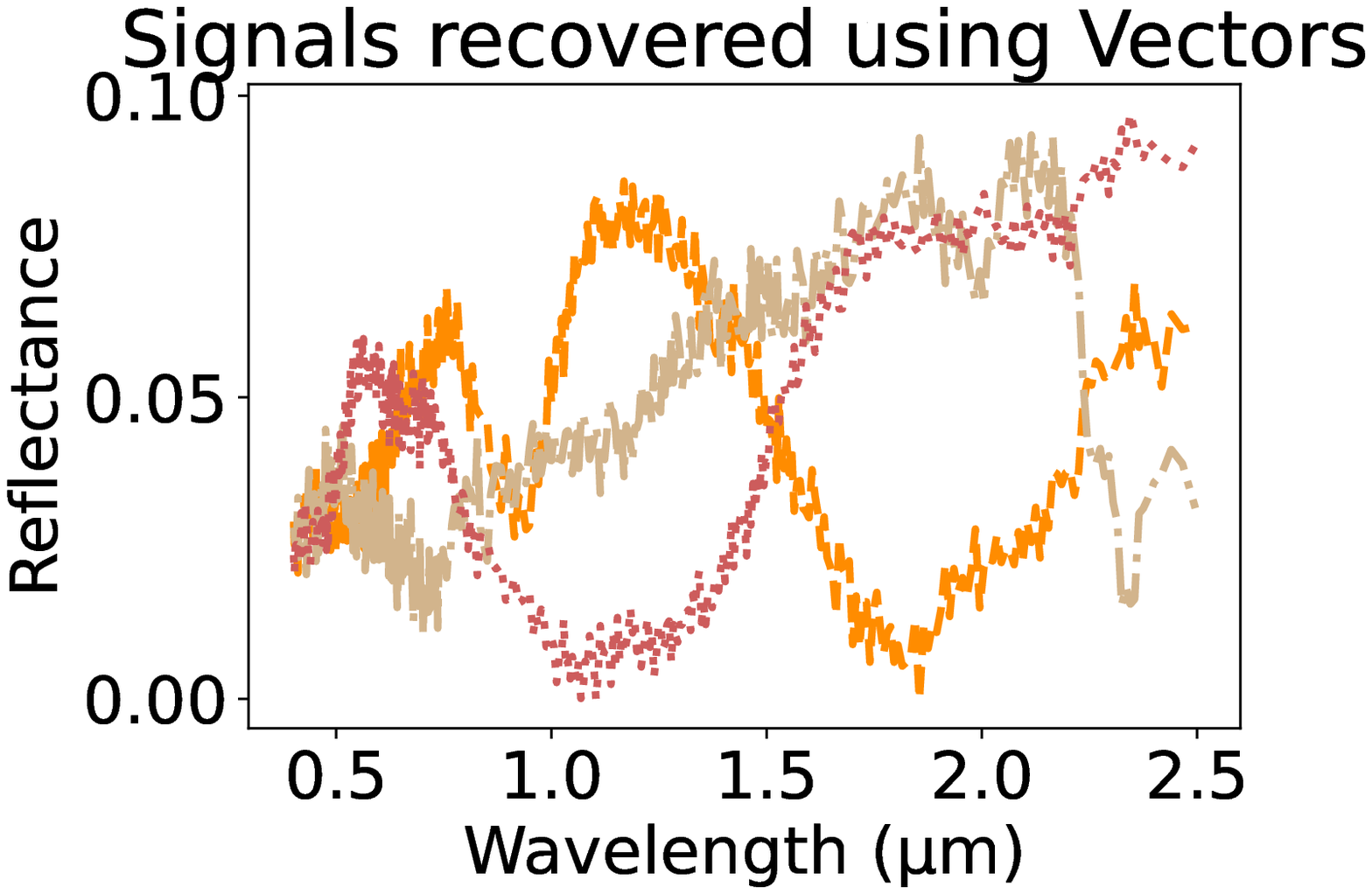}};
    \node at (2.2,1.6) {\includegraphics[width=0.245\textwidth]{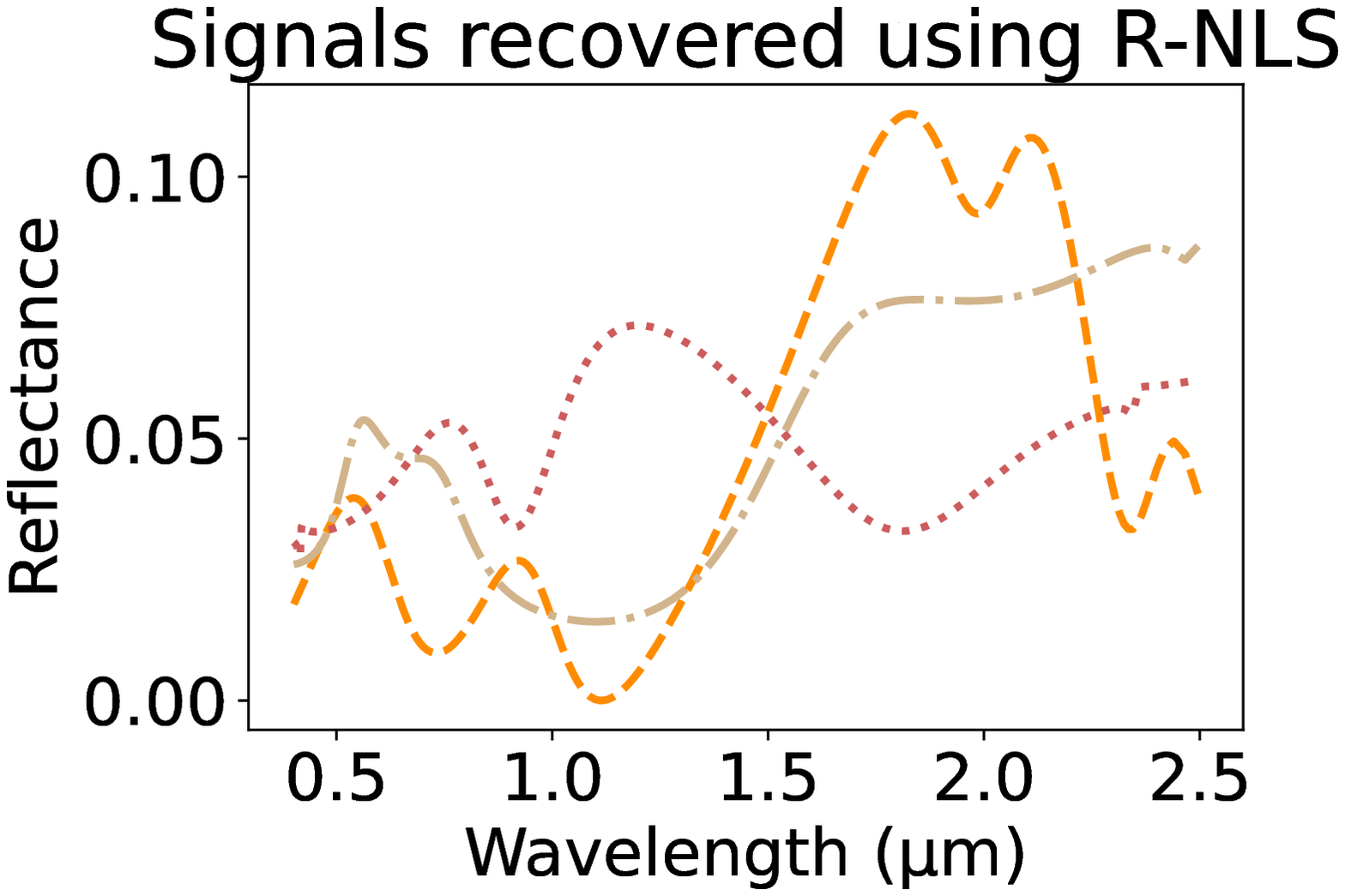}};
    \node at (-2.2,-1.6) {\includegraphics[width=0.245\textwidth]{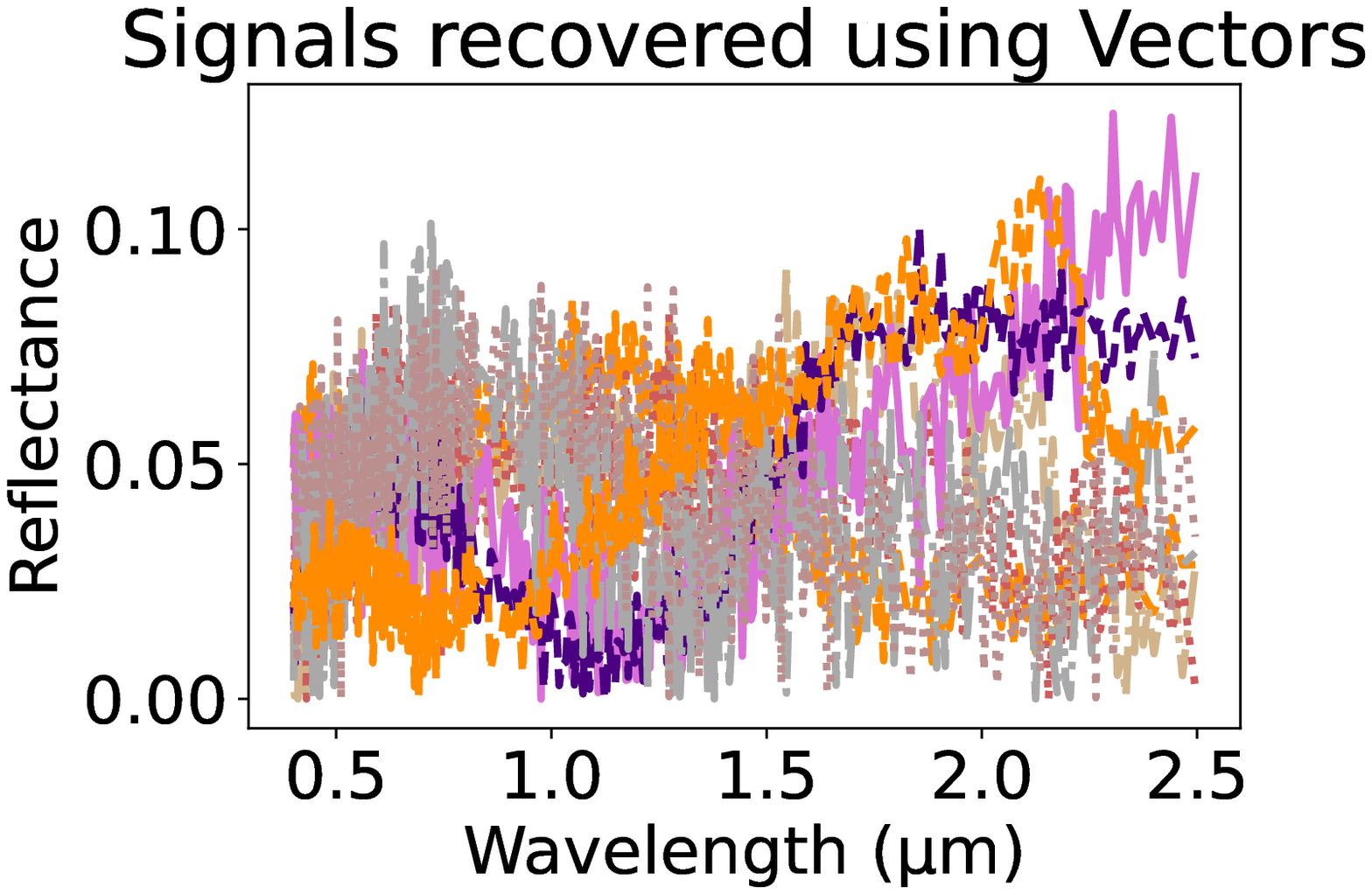}};
    \node at (2.2,-1.6) {\includegraphics[width=0.245\textwidth]{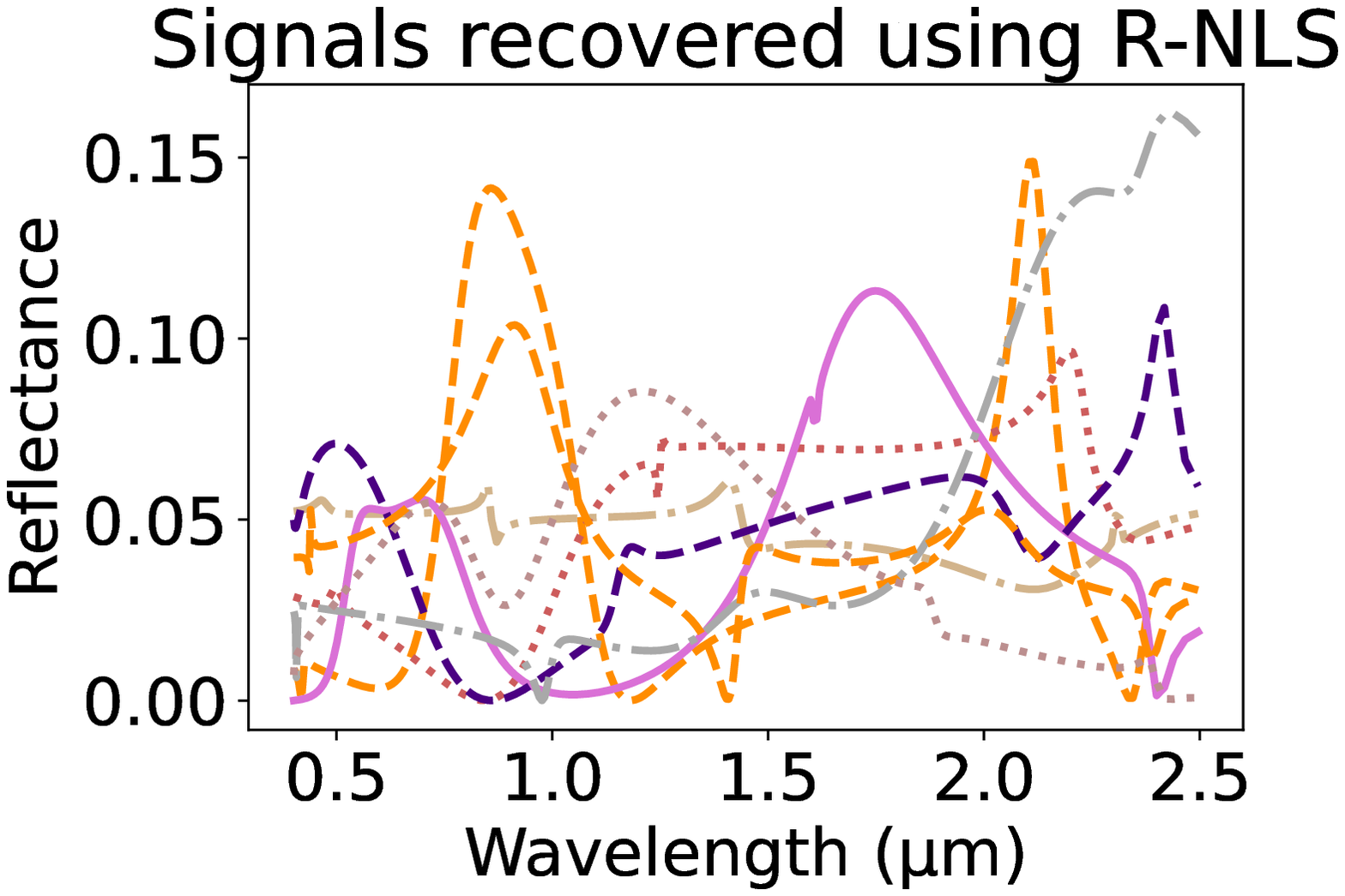}};
    \end{tikzpicture}
    \caption{Example of recovered factor $A$ for $r=3$ (up) or $8$ (down), for HALS using vectors (left) or R-NLS (right).}
    \label{fig:refl3}
    \vspace*{-0.2cm}
\end{figure}
\begin{figure}[h!]
    \centering
    \includegraphics[width=0.46\textwidth]{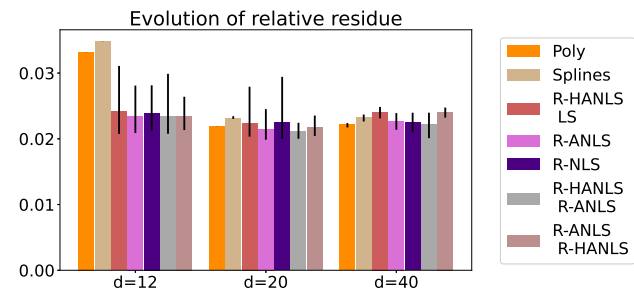}
    \caption{Performance for varying degree of freedom ($d$). }
    \label{fig:refld}
    \vspace*{-0.3cm}
\end{figure}

\vspace*{-0.3cm}
 \subsection{Using (R-)NMF for classification}
We explore the possibility of using R-NMF in a real problem: the Indian Pines classification problem. Classification is performed using the k-nearest-neighbours (KNN) algorithm with $k=5$
. A portion of 70\% of the data is used for training. 

The data is pre-processed by NMF as follows. Let $Y\in \mathbb{R}^{200\times 21025}$ be the data set, with 21025 observations of which 6307 should be classified. As the signals are spectra
, it can be assumed that they are close to polynomials, splines or rational functions. We approximate $Y$ as $AX^\top$ using NMF, where $A$ contains in its columns sampled functions (note that there is no knowledge of labels at this stage). We use the R-HANLS methods for rational functions due to the high number of observations. Then the classification is performed on $X^\top$ \hspace*{-0.1cm} instead of $Y$. The hope is that NMF filters noise in the data, while limiting the number of factors.

We also considered PCA to do the preprocessing (PCA does not have a nonnegativity constraint)\replace{, and the method from \cite{debals2015lowner} using rational functions, but without nonnegativity contraints, called Rational}{. We also tested the method of \cite{debals2015lowner} but the results were not convincing (the accuracy was always below 68\%). Perhaps the size of the dataset is too large, or imposing the degrees to be always equal is not optimal for this approach. Nevertheless, we tested the factorization with rational functions without nonnegativity constraints, using our R-HANLS algorithm, with projection onto rational functions using a least squares solver (Rational). This projection may not be ideal in this case without nonnegativity, but it gives an idea of performance. It also shows that our approach can easily be extended to other sets than the set of nonnegative rational functions.} Methods are tested 10 times over different initializations. The number of degrees of freedom is $20$, and all methods are limited to 100 seconds. The best factorization for each rank is selected using a K-fold with 5 folds on the 70\% of data constituting the training set. As a base line, we use the result of the classification on the whole data set without preprocessing. It is thus independent of the rank, and corresponds to rank $r=200$.

\begin{figure}[htb] 
    \centering
    \vspace*{-0.3cm}
    \hspace*{-0.2cm}
    \includegraphics[width = 0.5\textwidth]{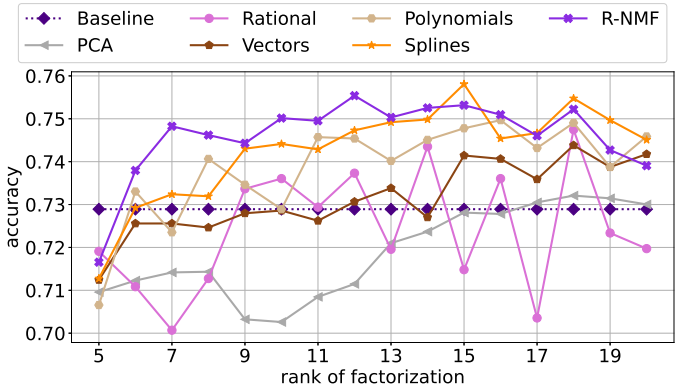}
    \vspace*{-0.6cm}
    \caption{Accuracy of classification using NMF as preprocessing with various factorization ranks.}
    \label{fig:classi}
    \vspace*{-0.15cm}
\end{figure}
Fig.~\ref{fig:classi} shows the accuracy obtained according to the factorisation rank considered during pre-processing.
It confirms the interest of R-NMF since this method obtains the best results when $r<15$. For higher rank values, NMF using splines also obtains very good results\replace{, but R-NMF stays competitive. Moreover,  it is the only method that outperforms the baseline as soon as $r=6$. It is therefore able to represent the data using a very small rank.}{, while R-NMF starts to slightly overfit.} We also see that imposing nonnegativity makes sense, since PCA and Rational which do not have this constraint obtain the worst results\replace{ and do not manage to improve the baseline. Imposing factors to be rational without imposing the nonnegativity seems to be a very bad choice in this case.}{.}

On the other hand, using standard NMF \replace{leads most of the time to accuracy very close to the accuracy obtained when using all features (the base line)}{improves the baseline only for ranks higher than $15$}, while using polynomials or splines improves accuracy compared to standard NMF, but to a lesser extent than when using rational functions
\vspace*{-0.1cm}
\subsection{Discussion}
We observed that R-NMF performs better than other NMF approaches on semi-synthetic data or real-life data. A likely explanation is that, as polynomials and splines, rational functions have less parameters than data points, and hence some form of noise averaging takes place unlike for HALS using vectors. Moreover, they generalize polynomials and splines, and are thus able to express a wider range of shapes, which allows R-NMF to recover more representative signals.
On the other hand, the presented methods to compute R-NMF do not obtain very satisfactory results when the data are actually rational functions. Indeed, even when there is no noise, these methods are not always able to recover the original signals and this despite the fact that the factorization to be recovered is unique, see Section~\ref{sec:unique}.

To explain this phenomenon,  note that for R-ANLS and R-HANLS each update is not guaranteed to be optimal, and these two methods do thus inexact BCD. But doing inexact BCD was not a problem for polynomials or splines \cite{hautecoeur2019heuri}, so this explanation is not enough. 
An other explanation is that the set of rational functions is not convex, and is not even closed for addition, so there may be many local minima in which the algorithms can get stuck, which also explain why R-NMF is very sensitive to the initialization. 

Moreover, R-NMF approaches and especially R-ANLS and R-NLS are more complex and more resources demanding than NMF using splines polynomials or vectors. 
One way of investigation to reduce the complexity of the algorithms is to consider other representations of rational functions than fractions of polynomials that could be more accurate, but for which the nonnegativity condition is not trivial, like barycentric representation \cite{filip2018rational,nakatsukasa2018aaa} or sum of fractions \cite{gustavsen1999rational}, which is left for future work. 

Furthermore, the methods presented in this paper can be extended to a wider range of rational functions where the numerator and the denominator are not imposed to be nonnegative polynomials, but can be any nonnegative function. To use least-squares based methods, a parametrization of the nonnegativity of the used functions is necessary. If an R-HANLS approach is chosen, the only necessity is that the projection exists. This means, for the Least Squares or the Alternating Least Squares projection, that a parametrization of the nonnegativity of the used functions exists. For Conic projection, a description of the nonnegativity constraint of the used functions must exist (without caring if it is the numerator or the denominator). RKFIT+ requires an operator $h'$ computing the best numerator when the denominator is fixed (possibly neglecting the nonnegativity). Finally, the LinProj projection requires the functions that are used to be linearly parametrizable, in order to keep the problem linear.
This comment highlights the many existing possibilities when performing R-NMF. 

\section{Conclusion}
We introduced R-NMF, a factorization model using nonnegative rational functions to unmix sampled signals, and presented three  approaches to solve the problem. When comparing with standard NMF or with NMF over polynomials or splines, we found that the use of rational functions can outperform existing methods, for synthetic datasets and also for a real life dataset, at the cost of an increase in computational time for large-scale data and greater sensitivity to initialization. This better reconstruction is probably due to the wider range of representation of rational functions. 

From our results, it appears that R-HANLS 
obtains on average worse results than R-ANLS. On an other hand, R-NLS is able to obtain good results on very small problems, but when the problem size increases the method slows down very strongly and has difficulties to converge. Moreover, R-NLS is resource demanding, and R-ANLS also but to a lesser degree. Therefore, we recommend to use R-NLS only for very small problems, when $n<50$ for example. For small problems, R-ANLS is accurate and not too slow (when $n<1000$). However, for larger problems, R-HANLS is more appropriate as it is much less demanding. However, when possible, it should be initialized by a few iterations of R-ANLS to improve performances. 

To take the best of R-NMF, it is necessary to continue to investigate methods of resolution, for example by combining the presented methods or by using other representation of rational functions. Nevertheless, the presented methods can be used for rational function in the broadest sense (not only for the ratio of two polynomials) under some conditions, which widens the field of possibilities for NMF.


\appendix
\section{Description and implementation of the projection methods}
\label{sec:implementation}
We describe the projection methods in more details.

\textbf{Least Squares:} we use the \texttt{least\_squares} method of python, provided with the jacobian of the cost function, with default parameters. It therefore solves the problem using trust region reflective algorithm. The algorithm is stopped when either the cost function is not enough improved anymore, or the iterates are too close from each others, or the norm of the gradient is very small. 
    
\textbf{Alternating Least Squares: } problem (\ref{eq:rkfitNum}) is as a conic problem. Indeed, using Markov-Lukacs theorem, nonnegative polynomials can be expressed using sum of squares polynomials (SOS), and SOS can be expressed using positive semidefinite matrices \cite{blekherman2012semidefinite}. Therefore, problem (\ref{eq:rkfitNum}) can be rewritten using appropriate matrices $V_{\bm \tau}(g)$ a Vandermonde-like matrix taking into account the known denominator,  
    and $R$ 
    the matrix recovering the coefficients of the polynomial from the positive semi-definite matrices. $R$ is built using Gram matrices (\cite{hautecoeurNeuro}). Let $\mathcal{S}_+^d$ be the set of positive semidefinite matrices in $\mathbb{R}^{d\times d}$. We have \vspace*{-0.2cm}
    \begin{equation} \vspace*{-0.3cm}\label{eq:conicF} \min_{\substack{(S_1,S_2) \in  \mathcal{S}_+^{\frac{d_1}{2}+1}\times \mathcal{S}_+^{\frac{d_1}{2}}}} \Bigg|\Bigg|\bm z - V_{\bm \tau}(g) R\begin{bmatrix} vec(S_1)\\vec(S_2)\end{bmatrix}\Bigg|\Bigg|^2.\end{equation}
      Problem (\ref{eq:conicF}) can be compressed using the singular value decomposition of $V_{\bm \tau}(g) = U\Sigma W^\top$. It can be proved that using $\tilde{V} = \Sigma W^\top$ 
      and $\tilde{\bm z} = U^\top \bm z$ leads to the same minimization problem, to one constant. It is solved using Mosek 9.2 \cite{mosek}. 
    The problem of finding the best denominator is solved using the same solver as for Least Squares. 
     
\textbf{Conic:} 
    A way to bypass the division difficulty is to consider the modification suggested in \cite{barrodale1970two} on which we add nonnegativity constraints: \vspace*{-0.4cm}
    \begin{equation}
    \vspace*{-0.1cm}
        \min_{h \in \mathcal{P}^{d_1,T}_+,\ g\in \mathcal{P}^{d_2,T}_+, \ g(\tau_m)=1} \bigg|\bigg|\frac{\bm z g(\bm\tau) - h(\bm\tau)}{\pB(\bm \tau)} \bigg|\bigg|^2 
        \label{eq:conic0}
        \vspace*{-0.1cm}
    \end{equation}
    where $\pB \in  \mathcal{P}^{d_2,T}_+$ is fixed so that $\pB(\tau_m)=1$. This equation is equivalent to (\ref{eq:proj_ratio}) when $g(\bm\tau) = \pB(\bm\tau)>0$. It transforms the problem into a simpler problem on polynomials. 
    
    The normalisation of $g$ is important to avoid the trivial solution $g=h=0$, and can be done without loss of generality as using $\alpha h$ and $\alpha g$ leads to the same rational function $f = h/g$. 
    Unfortunately, even with normalization, this approach leads to poor reconstruction results, even when input $\bm z$ is exactly a discretization of a nonnegative rational function. We observed that the error is often much smaller on (\ref{eq:conic0}) than on (\ref{eq:proj_ratio}). For example, suppose that $g(\tau_i)$ and $h(\tau_i)$ are very small and $\pB(\tau_i)=1$. In this case, $\frac{\bm z_i g(\tau_i)-h(\tau_i)}{\pB(\tau_i)}$ can be much smaller than $\bm z_i-\frac{h(\tau_i)}{g(\tau_i)}$. Adding a regularization term on the cost function $\lambda\| h(\bm\tau)-\pB(\bm \tau)\|^2$ with various $\lambda\geq 0$ allows to reduce the problem but not in a sufficient way. We therefore slightly modify the approach and approximate $\bm z$ by $f(\bm \tau) = \frac{h(\bm\tau)}{\pB(\bm\tau) + \delta(\bm\tau)}$, where $g\in \mathcal{P}^{d_1,T}_+, \ \pB,\delta \in \mathcal{P}^{d_2,T}_+$ and $\pB$ is fixed. So $\|\bm z - f(\bm\tau)\|^2 = $
    \begin{align} 
         \bigg\| \frac{ \bm z \pB(\bm\tau) + \bm z \delta(\bm\tau) - h(\bm\tau)}{\pB(\bm\tau)} \cdot \frac{\pB(\bm\tau)}{\pB(\bm\tau)+\delta(\bm\tau)} \bigg\|^2.
    \end{align}
    As $\delta$ and $\pB$ are nonnegative, $0<\frac{\pB(\bm \tau)}{\delta(\bm \tau) + \pB(\bm \tau)} \leq1$. The cost function of (\ref{eq:conic_2}) is thus an upper bound of the cost function of problem (\ref{eq:proj_ratio}):
    \vspace*{-0.4cm}
    \begin{equation}
        \min_{h \in \mathcal{P}^{d_1,T}_+,\delta\in \mathcal{P}^{d_2,T}_+} \bigg|\bigg|\bm z + \frac{\bm z \delta(\bm\tau) - h(\bm\tau)}{\pB(\bm\tau)}\bigg|\bigg|^2.\label{eq:conic_2}
    \end{equation}
    Solving problem (\ref{eq:conic_2}) ensures to have a rational function that leads also to a low cost in problem (\ref{eq:proj_ratio}), which was not the case when solving (\ref{eq:conic0}).
   It can be solved in a similar way as (\ref{eq:conicF}). Using appropriate matrices $V_{\bm \tau}(\pB,\bm z)$ 
   and $R$
   , we have: 
   \vspace*{-0.4cm}
    \begin{equation} \label{eq:conicF2} \min_{\substack{(S_1,S_2,D_1,D_2) \in \\ \mathcal{S}_+^{\frac{d_1}{2}+1}\times \mathcal{S}_+^{\frac{d_1}{2}}\times \mathcal{S}_+^{\frac{d_2}{2}+1}\times \mathcal{S}_+^{\frac{d_2}{2}}}} \hspace*{-0.1cm} \Bigg\|\bm z + V_{\bm \tau}(\pB, \bm z) R\begin{bmatrix} vec(S_1)\\vec(S_2) \\vec(D_1) \\vec(D_2) \end{bmatrix}\Bigg\|^2. \hspace*{-0.05cm}\end{equation}

    Problem (\ref{eq:conicF2}) can be compressed, using the singular value decomposition of $V_{\bm \tau}(\pB, \bm z) = U\Sigma W^\top$, with $\tilde{V} = \Sigma W^\top$ and $\tilde{\bm z} = U^\top \bm z$. This problem is solved using Mosek 9.2 solver. 
    
\textbf{RKFIT+: } operator $h'$ from (\ref{eq:rkfit}) can  be solved analytically using matrix $V_1$ such that $\frac{h(\bm\tau)}{\pB(\bm \tau)} = V_1 \bm h$, where $\bm h$ is the coefficient vector of $h$. 
Problem becomes:
    \begin{equation}
        \frac{h'(\pB,\delta,\bm z,\bm{\tau})}{\pB(\bm{\tau})} = V_1 \  \text{argmin}_{\bm h}  \Big\|\bm z + \frac{\bm z\delta(\bm \tau)}{\pB(\bm \tau)} - V_1 \bm h\Big\|^2.
    \end{equation}
    The solution of this problem can be expressed using  $V_1^\dagger$ the pseudo-inverse of $V_1$ as: 
    $    \frac{h'(\pB,\delta,\bm z,\bm{\tau})}{\pB(\bm \tau)} = V_1V_1^\dagger \bigg( \bm z + \frac{\bm z \delta(\bm \tau)}{\pB(\bm\tau)} \bigg) \label{eq:atilde}
    .$
    
    Similarly, we can define $V_2$ so that $\frac{\bm z \delta(\bm \tau)}{\pB(\bm\tau)} = V_2\bm \delta$, where $\bm \delta$ is the coefficient vector of $\delta$. Problem (\ref{eq:rkfit}) is then $ \label{eq:rkfit3} \min_{\delta\in \mathcal{P}^{d_2,T}_+} \| (I-V_1V_1^\dagger)(\bm z + V_2 \bm \delta) \|^2.$
    This problem can be compressed, using SVD decomposition of $ (I-V_1V_1^\dagger)V_2$: $U\Sigma W^\top $. The cost becomes $\| 
    U^\top(I-V_1V_1^\dagger)\bm z + \Sigma W^\top \bm \delta \|^2$.     The problem can then be solved using  Mosek 9.2.
    
    
\textbf{LinProj:}  this problem is solved using Mosek 9.2. This solver sometimes consider a problem as feasible when the constraint is violated by a value smaller than $10^{-6}$. To avoid this small violation to lead to a huge value of $\max_i\big(\big|\bm{z}_i - \frac{h(\tau_i)}{g(\tau_i)}\big| \big)$, $g(\bm \tau)$ is imposed to be greater than $1$. 

\normalem
\bibliographystyle{abbrv}
\bibliography{biblio}
\end{document}